\newtheorem{theorem}{Theorem}
\newtheorem*{theorem*}{Theorem}
\newtheorem{lemma}{Lemma}
\newtheorem{remark}[theorem]{Remark}
\newtheorem{example}[theorem]{Example}
\theoremstyle{definition}
\newtheorem{definition}{Definition}
\newcommand{\hide}[1]{}
\newcommand{\s}[1]{\mathsf{#1}}
\newcommand{\NW}{\s{NW}}
\newcommand{\G}{\mathcal{G}}
\newcommand{\x}{\mathbf{x}}
\newcommand{\y}{\mathbf{y}}
\newcommand{\z}{\mathbf{z}}
\newcommand{\p}{\mathbf{p}}
\newcommand{\q}{\mathbf{q}}
\newcommand{\N}{\mathbb N}
\newcommand{\Z}{\mathbb Z}
\newcommand{\R}{\mathbb R}
\newcommand{\poly}[1]{\mathsf{poly}(#1)}
\newcommand{\MBB}{\text{MBB}}
\def\eps{\varepsilon}
\title{Computing Pareto-Optimal and Almost Envy-Free \\ Allocations of Indivisible Goods\thanks{A preliminary version of the paper appeared at AAAI 2021 \cite{GargM21}. Work supported by the NSF Grant CCF-1942321 }}
\author{Jugal Garg\footnote{University of Illinois at Urbana-Champaign, USA} \\
\texttt{\small jugal@illinois.edu} 
\and
Aniket Murhekar\footnote{University of Illinois at Urbana-Champaign, USA}\\
\texttt{\small aniket2@illinois.edu}
}
\date{}
\begin{document}

\maketitle

\begin{abstract}
We study the problem of fair and efficient allocation of a set of indivisible goods to agents with additive valuations using the popular fairness notions of envy-freeness up to one good (EF1) and equitability up to one good (EQ1) in conjunction with Pareto-optimality (PO). There exists a pseudo-polynomial time algorithm to compute an EF1+PO allocation and a non-constructive proof of the existence of allocations that are both EF1 and fractionally Pareto-optimal (fPO), which is a stronger notion than PO. We present a pseudo-polynomial time algorithm to compute an EF1+fPO allocation, thereby improving the earlier results. Our techniques also enable us to show that an EQ1+fPO allocation always exists when the values are positive and that it can be computed in pseudo-polynomial time.

We also consider the class of $k$-ary instances where $k$ is a constant, i.e., each agent has at most $k$ different values for the goods. For such instances, we show that an EF1+fPO allocation can be computed in strongly polynomial time. When all values are positive, we show that an EQ1+fPO allocation for such instances can be computed in strongly polynomial time. Next, we consider instances where the number of agents is constant and show that an EF1+PO (likewise, an EQ1+PO) allocation can be computed in polynomial time. These results significantly extend the polynomial-time computability beyond the known cases of binary or identical valuations.

We also design a polynomial-time algorithm that computes a Nash welfare maximizing allocation when there are constantly many agents with constant many different values for the goods. Finally, on the complexity side, we show that the problem of computing an EF1+fPO allocation lies in the complexity class $\s{PLS}$. 
\end{abstract}

\section{Introduction}\label{sec:intro}
The problem of fair division was formally introduced by Steinhaus \cite{steinhaus} and has since been extensively studied in various fields, including economics and computer science \cite{Brams1996FairD,Moulin2003}. It concerns allocating resources to agents in a \textit{fair} and \textit{efficient} manner and has various practical applications such as rent division, division of inheritance, course allocation, and government auctions. Much of earlier work has focused on \textit{divisible} goods, which agents can share. In this setting, a prominent fairness notion is \textit{envy-freeness}~\cite{foleyEF,VARIAN1974-efpo}, which requires that every agent prefer their own bundle of goods to that of any other. 
On the other hand, when the goods are \textit{indivisible}, envy-free allocations need not even exist, e.g., in the simple case of one good and two agents. Other classical notions of fairness, like \textit{equitability} and \textit{proportionality}, may also be impossible to satisfy when goods are indivisible. However, fair division of indivisible goods remains an important problem since goods cannot always be shared and because it models several practical scenarios such as a course allocation \cite{othman2010coursealloc}. We refer the reader to the recent surveys \cite{walsh2020survey,amanatidis22survey} for other applications and recent results.

Since allocations satisfying standard fairness criteria fail to exist in the case of indivisible goods, several weaker fairness notions have been defined. A relaxation of envy-freeness called \textit{envy-freeness up to one good} (EF1) was defined by Budish \cite{budish2011approxCEEI}. An allocation is said to be EF1 if every agent prefers their own bundle over the bundle of any other agent after removing at most one good from the other agent's bundle. When the valuations of the agents for the goods are monotone, EF1 allocations always exist and can be computed in polynomial time \cite{lipton}.

The standard notion of economic efficiency is Pareto optimality (PO). An allocation is said to be PO if no other allocation makes an agent better off without making someone else worse off. A natural question is whether EF1 can be achieved with PO under additive valuations, which is the valuation class we focus on in this work. The concept of \textit{Nash welfare} provides a positive answer to this question. The Nash welfare is the geometric mean of the agents' utilities, and the allocation maximizing it achieves a tradeoff between efficiency and fairness. Caragiannis et al. \cite{caragiannis16nsw-ef1} showed that any maximum Nash welfare (MNW) allocation is EF1 and PO. For the special case of binary additive valuations, the MNW allocation can be computed in polynomial time \cite{darmann2014binary,barman2018binarynsw,nisarg20binaryonerule}. However, in general, the problem of computing the MNW allocation is APX-hard \cite{lee2015nsw-apx,garg2017nswhardness}. Moreover, it is not known if approximately Nash-optimal allocations retain the EF1 fairness guarantee, implying that approximation algorithms for MNW allocation, e.g., \cite{nguyen2014,cole2015nswapprox} may not be useful for computing an EF1+PO allocation.

Bypassing this barrier, Barman, Krishnamurthy, and Vaish \cite{Barman18FFEA} devised a pseudo-polynomial time algorithm that computes an allocation that is both EF1 and PO. They also showed that allocations that are both EF1 and \textit{fractionally} Pareto-optimal (fPO) always exist, where an allocation is said to be fPO if no fractional allocation exists that makes an agent better off without making anyone else worse off. They showed this result via a non-constructive convergence argument used in real analysis and did not provide an algorithm for computing such an allocation. Clearly, fPO is a stronger notion of economic efficiency, so the problem of computing EF1+fPO allocations is important.  Another reason to prefer fPO allocations over PO allocations in practice is that the former property admits efficient verification, whereas checking if an allocation is PO is known to be coNP-complete \cite{keijzer}. When a centralized entity is responsible for the allocation, all participants can efficiently verify if an allocation is fPO (and thus PO). However, in general, the same efficient verification is not possible for PO allocations. 

In this paper, we present a pseudo-polynomial time algorithm that computes an allocation that is EF1+fPO. Not only does this improve the result of Barman et al. \cite{Barman18FFEA}, but it also provides other interesting insights. We consider the class of $k$-ary instances. i.e., each agent has at most $k$ different (agent-specific) values for the goods. Our analysis shows that an EF1+fPO allocation can be found in polynomial time for $k$-ary instances when $k$ is a constant. Our result becomes especially interesting because computing the MNW allocation remains NP-hard for such instances \cite{lee2015nsw-apx}, even for $k=3$ \cite{amanatidis2020mnwefx}. Further, at present, this is the \textit{only} class apart from binary or identical valuations for which EF1+fPO allocations are polynomial time computable.

While $k$-ary instances are interesting theoretically to understand the limits of tractability in computing fair and efficient allocations, they are also relevant from a practical perspective. Eliciting agents' values for goods is often tricky, as agents may not be able to assert exactly what values they have for different goods. A simple protocol that the entity in charge of the allocation can do is to ask each agent to ``rate" the goods using a few (constantly many) values. Based on these responses, the valuations of the agents can be established.

Our results also extend to the fairness notion of \textit{Equitability up to one good} (EQ1), which is a generalization of the classical fairness notion of equitability. An allocation is said to be EQ1 if the utility an agent gets from her bundle is no less than the utility any other agent gets after removing one specific good from their bundle. Using similar techniques to that of Barman et al. \cite{Barman18FFEA}, a pseudo-polynomial time algorithm to compute an EQ1+PO allocation was developed by Freeman et al. \cite{freeman2019eqxpo} when all the values are positive. We show the stronger result that EQ1+fPO allocations always exist for positive-valued instances and can be computed in pseudo-polynomial time. Our techniques also show that for $k$-ary instances with positive values where $k$ is a constant, an allocation that is EQ1 and fPO can be computed in polynomial time. 

We next show that for constant $n$, an EF1+PO allocation can be computed in time polynomial in the number of goods. This result is significant since the number of agents $n$ is constant in many practical applications. In contrast, computing the MNW allocation remains NP-hard for $n=2$. Our techniques also show that for constant $n$, an EQ1+PO allocation can also be computed in polynomial time. 

Further, for $k$-ary instances with constant $n$ and $k$, we show that many fair division problems, including computing the MNW allocation, have polynomial time complexity. This improves the result of Bliem et al. \cite{bliem2016param}, showing that the EF+PO problem is tractable in this case.

We also make progress on the complexity front. We prove that the problem of computing an EF1+fPO lies in the complexity class Polynomial Local Search ($\s{PLS}$). For this, we carefully analyze our algorithm computing an EF1+fPO allocation and show that it has the structure of a local-search problem. Finally, we remark that our techniques also improve the results of Chakraborty et al. \cite{chak2020wef1} and Freeman et al. \cite{freeman2020chores} for the problems of computing weighted-EF1+fPO allocations of goods and EQ1+fPO allocations of \emph{chores}, respectively.

We summarize our results in the following table. A preliminary version of the present work appeared at  AAAI 2021 \cite{GargM21}.

\begin{table}[t]
\label{tab:results}
\centering
\begin{tabular}{|p{0.25\linewidth}||c|c|}\hline
	Instance type & EF1+fPO & EQ1+fPO$^*$  \\\hhline{|=||=|=|}
	constant $n$ & $\poly{m}$ (Theorem~\ref{thm:const-n-ef1po}) & $\poly{m}$ (Theorem~\ref{thm:const-n-eq1po}) \\\hline
	$k$-ary with constant $k$ & $\poly{m,n}$ (Theorem~\ref{thm:ef1po-sparse}) & $\poly{m,n}$ (Theorem~\ref{thm:eq1po-sparse}) \\\hline
	general additive & $\poly{n,m,v_{max}}$ (Theorem~\ref{thm:ef1po-pseudopoly}) & $\poly{n,m,v_{max}}$  (Theorem~\ref{thm:eq1po-pseudopoly}) \\\hline
\end{tabular}
\caption{Best-known algorithm run-times for the EF1+fPO and EQ1+fPO problems classified according to instance type. Here $n$ and $m$ denote the number of agents and goods, respectively, and $v_{max}$ denotes the maximum utility value. Agents have additive valuations. $^*$Results for the EQ1+fPO problem apply to positive instances.}
\end{table}

\subsection{Related Work}
Since the fair division literature is too vast, we refer the reader to surveys \cite{walsh2020survey,amanatidis22survey} for results on other fairness notions like proportionality. Below, we mention works related to fairness notions considered in this paper.

\paragraph{EF1+PO for goods.} Barman, Krishnamurthy, and Vaish \cite{Barman18FFEA} devised a pseudo-polynomial time algorithm that computes an allocation that is both EF1 and PO. This algorithm runs in time $\poly{n,m,v_{max}}$, where $n$ is the number of agents, $m$ is the number of items, and $v_{max}$ is the maximum utility value. Their algorithm first perturbs the values to a desirable form and then computes an EF1 and fPO allocation for the perturbed instance. Their approach is via \textit{integral market-equilibria}, which guarantees fPO at every step. The \textit{spending} of an agent, which is the sum of prices of the goods she owns in the equilibrium, works as a proxy for her utility. The returned allocation is approximately-EF1 and approximately-PO for the original instance, which, for a fine enough approximation, is EF1 and PO. Our algorithm proceeds similarly to their algorithm, with one main difference being that we do not need to consider any approximate instance and can work directly with the given valuations. Our algorithm returns an allocation that is not only PO but is fPO. Another key difference is the \textit{run-time analysis}: while their analysis relies on bounding the number of steps using arguments about \textit{prices}, our analysis is more direct and works with the \textit{values}. This allows us to prove a general result (Theorem~\ref{thm:ef1po-main}), a consequence of which is polynomial run-time for $k$-ary instances with constant $k$. Directly, such a conclusion cannot be drawn from the analysis of Barman et al. Another technical difference is that we raise the prices of multiple components of \emph{least spenders} simultaneously, unlike in \cite{Barman18FFEA}, and our analysis is arguably simpler than theirs. Barman et al. showed that there is a non-deterministic algorithm that computes an EF1+fPO allocation since checking if an allocation is fPO can be done efficiently. In contrast, we present a deterministic algorithm computing an EF1+fPO allocation, albeit with worst-case pseudopolynomial run time. Garg and Murhekar \cite{garg2021sagt} showed that allocations that are EFX (envy-free up to the removal of \textit{any} good) and fPO exist and can be computed in polynomial time for a subclass of 2-ary instances called bivalued instances. 

\paragraph{EF1+PO for chores.} Recently, the existence of EF1+PO allocations of \textit{chores} has been shown for special classes, though the question of existence in its full generality remains open. The existence and polynomial time computation of an EF1+fPO allocation for chores is known for (i) Bivalued instances, shown by \cite{GMQ21bivaluedchores} and \cite{ebadian2021bivaluedchores} independently, (ii) two types of chores, \cite{aziz2022twotypes}, (iii) $n = 3$ agents \cite{GMQ22ijcai}, and (iv) two types of agents \cite{GMQ22ijcai}.

\paragraph{EQ1+PO.} Freeman et al. \cite{freeman2019eqxpo} presented a pseudo-polynomial time algorithm for computing an EQ1+PO allocation for instances with positive values. Since they consider an approximate instance, too, their algorithm does not achieve the guarantee of fPO. They also show that the leximin solution, i.e., the allocation that maximizes the minimum utility and, subject to this, maximizes the second minimum utility, and so on, is EQX (a stronger requirement than EQ1) and PO for positive utilities. However, a simple reduction from the partition problem shows that computing a leximin solution is intractable. For positive bivalued instances, Garg and Murhekar \cite{garg2021sagt} showed that an EQX+fPO allocation can be computed in polynomial time. For chores, Freeman et al. \cite{freeman2020chores} showed that EQ1+PO allocations can be computed in pseudo-polynomial time.

\paragraph{Other results.} Bredereck et al. \cite{knop2019param} presented a framework for fixed-parameter algorithms for many fairness concepts, including EF1+PO, parameterized by $n+v_{max}$. Our results improve their findings and show that the problem has polynomial time complexity for the cases of (i) constant $n$, (ii) constant number of utility values, (iii) $v_{max}$ bounded by $\poly{m,n}$. 

\section{Preliminaries}
Let $\N^+$ denote the set of positive integers. For $t\in\N^+$, let $[t]$ denote $\{1,\dots,t\}$.

\paragraph{Problem setting.}
A \textit{fair division instance} is a tuple $(N,M,V)$, where $N = [n]$ is a set of $n\in\N^+$ agents, $M = [m]$ is the set of $m\in\N^+$ indivisible items, and $V = \{v_1,\dots,v_n\}$ is a set of utility functions, one for each agent $i\in N$. Each utility function $v_i : M \rightarrow \N^+$ is additive and is specified by $m$ numbers $v_{ij}\in\N$, one for each good $j\in M$, denoting the value agent $i$ has for good $j$. Additivity of the valuation functions implies that for every agent $i \in N$, and $S \subseteq M$, $v_i(S) = \sum_{j\in S} v_{ij}$. Further, we assume that for every good $j$, there is some agent $i$ such that $v_{ij} > 0$. Otherwise, the good $j$ for which $v_{ij}=0$ for all $i$ can be assigned to any agent arbitrarily. Note that we can in general work with rational values since they can be scaled to make them integral.

We call a fair division instance $(N, M, V)$ a:
\begin{enumerate}
\item \textit{Binary} instance if for all $i\in N$ and $j\in M$, $v_{ij}\in \{0,1\}$.
\item $k$-\textit{ary} instance, if $\forall i\in N$, $|\{v_{ij} : j\in M\}| \le k$.
\item \textit{Positive-valued} instance if $\forall i\in N$, $\forall j\in M$, $v_{ij} > 0$.
\end{enumerate}
\noindent Note that the class of $k$-ary instances generalizes the class of $k$-valued instances as defined by Amanatidis et al. \cite{amanatidis2020mnwefx}, in which all the values belong to a $k$-sized set, whereas we allow each agent to have $k$ different values for the goods.

\paragraph{Allocation.} An \textit{allocation} $\x$ of goods to agents is a $n$-partition of the goods $\x_1, \dots, \x_n$, where agent $i$ is allotted $\x_i \subseteq M$, and gets a total value of $v_i(\x_i)$. A \textit{fractional allocation} $\x \in [0,1]^{n\times m}$ is a fractional assignment such that for each good $j\in M$, $\sum_{i\in N} x_{ij} \le 1$. Here, $x_{ij}\in[0,1]$ denotes the fraction of good $j$ allotted to agent $i$.

For an agent $i\in N$, let $U_i$ be the number of different utility values $i$ can get in any allocation. Let $U = \max_{i\in N} U_i$.

\paragraph{Fairness notions.} An allocation $\x$ is said to be:
\begin{enumerate}
\item \textit{Envy-free up to one good} (EF1) if for all $i,h \in N$, there exists a good $j\in \x_h$ s.t. $v_i(\x_i) \ge v_i(\x_h \setminus \{j\})$.
\item \textit{Equitable up to one good} (EQ1) if for all $i,h \in N$, there exists a good $j\in \x_h$ s.t. $v_i(\x_i) \ge v_h(\x_h \setminus \{j\})$.
\end{enumerate}

\paragraph{Pareto-optimality.} An allocation $\y$ dominates an allocation $\x$ if $v_i(\y_i) \ge v_i(\x_i), \forall i$ and there exists $h$ s.t. $v_h(\y_h) > v_h(\x_h)$. An allocation is said to be \textit{Pareto optimal} (PO) if no allocation dominates it. Further, an allocation is said to be fractionally PO (fPO) if no fractional allocation dominates it. Thus, an fPO allocation is PO, but not vice-versa.

We remark that the EF1 and PO/fPO properties are scale-invariant. That is, if an allocation $\x$ is EF1+fPO for an instance $I$, then $\x$ is also EF1+fPO for an instance $I'$ where the valuation function of any agent $i$ is multiplied by a positive constant $\lambda_i$, i.e., $v'_{ij} = \lambda_i\cdot v_{ij}$ for every $i\in N$ and $j\in M$.

\paragraph{Maximum Nash Welfare.} The Nash welfare of an allocation $\x$ is given by $\NW(\x) = \big(\Pi_{i\in N} v_i(\x_i)\big)^{1/n}$. An allocation that maximizes the NW is called an MNW allocation.

\paragraph{Fisher markets.} A Fisher market or a \textit{market instance} is a tuple $(N,M,V,e)$, where the first three terms are interpreted as before, and $e=\{e_1,\dots,e_n\}$ is the set of agents' budgets, where $e_i \ge 0$, for each $i\in N$. In this model, goods can be allocated fractionally. Given prices $\p=(p_1, \dots, p_m)$ of goods, each agent aims to obtain the set of goods that maximizes her total value subject to her budget constraint.

A \textit{market outcome} is a (fractional) allocation $\x$ of the goods to the agents and a set of prices $\p$. The spending of an agent $i$ under the market outcome $(\x,\p)$ is given by $\p(\x_i) = \sum_{j\in M} p_jx_{ij}$. For an agent $i$, we define the \textit{bang-per-buck} ratio $\alpha_{ij}$ of good $j$ as $v_{ij}/p_j$, and the \textit{maximum bang-per-buck} (MBB) ratio $\alpha_i = \max_{j} \alpha_{ij}$. We define $\MBB_i = \{j\in M : v_{ij}/p_j = \alpha_i \}$, called the \textit{MBB-set}, to be the set of goods that give MBB to agent $i$ at prices $\p$. A market outcome $(\x,\p)$ is said to be \textit{`on \MBB'} if for all agents $i$ and goods $j$, $x_{ij} > 0$ implies $j\in \MBB_i$. For integral $\x$, this means $\x_i\subseteq \MBB_i$ for all $i\in N$. 

A market outcome $(\x,\p)$ is said to be a \textit{market equilibrium} if:
\begin{enumerate}
\item[(i)] the market clears, i.e., all goods are fully allocated. Thus, for all $j$, $\sum_{i\in N} x_{ij} = 1$,
\item[(ii)] budget of all agents is exhausted, for all $i\in N$, $\sum_{j\in M} x_{ij} p_j = e_i$, and
\item[(iii)] agents only spend money on goods that give them maximum bang-per-buck, i.e., $(\x,\p)$ is on MBB. 
\end{enumerate}

Given a market outcome $(\x,\p)$ with $\x$ integral, we say it is \textit{price envy-free up to one good} (pEF1) if for all $i,h\in N$ there is a good $j\in \x_h$ such that $\p(\x_i) \ge \p(\x_h\setminus \{j\})$. For integral market outcomes on MBB, the pEF1 condition implies the EF1 condition. 

\begin{lemma}\label{lem:pEF1impliesEF1}
Let $(\x,\p)$ be an integral market outcome on MBB. If $(\x,\p)$ is pEF1 then $\x$ is EF1 and fPO.
\end{lemma}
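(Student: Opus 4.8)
The plan is to prove the two conclusions---EF1 and fPO---separately, since they rely on different properties of the market outcome. The EF1 claim follows from translating the pEF1 (price-based) condition into a value-based condition using the MBB structure, while the fPO claim follows from a general principle that integral allocations supported by market-clearing prices on MBB are fractionally Pareto-optimal.

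\textbf{Showing fPO.} First I would establish that any integral allocation $\x$ that forms part of a market outcome $(\x,\p)$ on MBB is fPO. The key idea is the first welfare theorem for Fisher markets: an allocation on MBB at prices $\p$ maximizes, for each agent, the value-per-unit-spending, so no reallocation can make everyone weakly better off and someone strictly better off without violating the budget/price constraints. Concretely, suppose for contradiction that a fractional allocation $\y$ dominates $\x$. Since $(\x,\p)$ is on MBB, for every agent $i$ and every good $j$ we have $v_{ij} \le \alpha_i p_j$, with equality whenever $x_{ij}>0$. Hence for each agent $v_i(\y_i) = \sum_j v_{ij} y_{ij} \le \alpha_i \sum_j p_j y_{ij} = \alpha_i \, \p(\y_i)$, while $v_i(\x_i) = \alpha_i \, \p(\x_i)$ exactly, because $\x$ is on MBB. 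If $\y$ dominates $\x$, then $v_i(\y_i)\ge v_i(\x_i)$ for all $i$ with strict inequality for some $h$; dividing through by $\alpha_i$ (assuming $\alpha_i>0$; goods an agent values at zero contribute nothing) gives $\p(\y_i) \ge \p(\x_i)$ for all $i$ and $\p(\y_h) > \p(\x_h)$ for some $h$. Summing over all agents yields $\sum_i \p(\y_i) > \sum_i \p(\x_i) = \sum_j p_j$, contradicting feasibility of $\y$ (each good is allocated at most once, so $\sum_i \p(\y_i) \le \sum_j p_j$). This contradiction establishes fPO.

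\textbf{Showing EF1.} For the EF1 claim I would use the pEF1 hypothesis together with the MBB property. Fix agents $i,h$, and let $j\in\x_h$ be the good guaranteed by pEF1, so that $\p(\x_i)\ge \p(\x_h\setminus\{j\})$. The goal is to convert this price inequality into the value inequality $v_i(\x_i)\ge v_i(\x_h\setminus\{j\})$. The point is that agent $i$'s own goods lie in her MBB set, so $v_i(\x_i) = \alpha_i\,\p(\x_i)$, whereas for any other good $g$, $v_{ig}\le \alpha_i p_g$, so $v_i(\x_h\setminus\{j\}) = \sum_{g\in\x_h\setminus\{j\}} v_{ig} \le \alpha_i\,\p(\x_h\setminus\{j\})$. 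Combining these with the pEF1 inequality gives $v_i(\x_i) = \alpha_i\,\p(\x_i) \ge \alpha_i\,\p(\x_h\setminus\{j\}) \ge v_i(\x_h\setminus\{j\})$, which is exactly the EF1 condition for the pair $(i,h)$ with witness good $j$. Since $i,h$ were arbitrary, $\x$ is EF1.

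\textbf{Main obstacle.} The routine steps are the two inequality manipulations, which I expect to go through smoothly. The place that needs care is handling agents (or goods) with $\alpha_i = 0$ or with zero prices, where dividing by $\alpha_i$ is illegitimate; I would argue that under the standing assumption that every good is valued positively by some agent, market-equilibrium prices are positive, and an agent with $\alpha_i=0$ receives only goods she values at zero, so the inequalities hold trivially in those degenerate cases. Verifying that these edge cases do not break the summation argument for fPO is the only genuinely delicate part.
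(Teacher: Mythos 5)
Your proposal is correct and takes essentially the same route as the paper. The EF1 half is identical: both arguments convert the pEF1 price inequality into a value inequality via $v_i(\x_i)=\alpha_i\,\p(\x_i)$ and $v_i(\x_h\setminus\{j\})\le \alpha_i\,\p(\x_h\setminus\{j\})$. For the fPO half, the paper observes that $(\x,\p)$ is a market equilibrium for budgets $e_i=\p(\x_i)$ and cites the First Welfare Theorem, whereas you unfold the standard proof of that theorem (domination forces $\p(\y_i)\ge\p(\x_i)$ for all $i$, strictly for some $h$, contradicting $\sum_i \p(\y_i)\le \p(M)=\sum_i\p(\x_i)$). That is the same argument made self-contained rather than cited, which is a mild plus.

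One caveat about your ``main obstacle'' paragraph: your instinct that $\alpha_i=0$ is the delicate point is right, but your resolution of it is not. For EF1 the degenerate case is indeed trivial (such an agent values every bundle at $0$). For fPO, however, the inequality $\p(\y_i)\ge\p(\x_i)$ genuinely cannot be derived when $\alpha_i=0$, and the conclusion can actually fail: take two agents and one good with $v_{11}=0$, $v_{21}=1$, price $p_1=1$, and $\x_1=\{1\}$, $\x_2=\emptyset$. This outcome is on MBB (good $1$ attains agent $1$'s maximum ratio $\alpha_1=0$) and is pEF1 and EF1, yet giving the good to agent $2$ Pareto-dominates $\x$. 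So the fPO claim implicitly needs every agent to have a positive MBB ratio (e.g., every agent values some good positively, with positive prices), under which your summation argument is airtight. This is not a defect relative to the paper: its appeal to the First Welfare Theorem has exactly the same blind spot, since that theorem requires non-satiation, which an agent valuing all goods at zero violates. In short, your proof is as correct as the paper's own, and your only misstep is asserting that the degenerate case ``holds trivially'' for fPO when it is in fact the one place where an extra hypothesis is being used.
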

\begin{proof}
We first show that $(\x,\p)$ forms a market equilibrium for the Fisher market instance $(N,M,V,e)$, where for every $i\in N$, $e_i = \p(\x_i)$. It is easy to see that the market clears and the budget of every agent is exhausted. Further, $\x$ is on MBB as per our assumption. Now the fact that $\x$ is fPO follows from the First Welfare Theorem \cite{mas1995microeconomic}, which shows that for any market equilibrium $(\x,\p)$, the allocation $\x$ is fPO. 

Since $(\x,\p)$ is pEF1, for all pairs of agents $i,h \in N$, there is some good $j\in \x_h$ s.t. $\p(\x_i)\ge \p(\x_h \setminus \{j\})$. Since $(\x,\p)$ is on MBB, $\x_i \subseteq \MBB_i$. Let $\alpha_i$ be the MBB-ratio of $i$ at the prices $\p$. By definition of MBB, $v_i(\x_i) = \alpha_i \p(\x_i)$, and $v_i(\x_h\setminus \{j\}) \le \alpha_i \p(\x_h\setminus \{j\})$. Combining these, we get that $\x$ is EF1.
\end{proof}

Lemma~\ref{lem:pEF1impliesEF1} suggests that in order to obtain an EF1+PO allocation, it suffices to compute an integral market outcome $(\x,\p)$ that is pEF1. In an outcome $(\x,\p)$, we call an agent $i$ with minimum $\p(\x_i)$ a \textit{least spender} (LS) agent. If an outcome $(\x,\p)$ is not pEF1, then there must be an agent $h$ such that any LS $i$ pEF1-envies $h$. That is, $\p(\x_h\setminus \{j\}) > \p(\x_i)$, for every $j\in\x_h$ and any LS $i$. We call such an agent $h$ a \textit{pEF1-violator} in $(\x,\p)$.

Given a price vector $\p$, we define the MBB graph to be the bipartite graph $G = (N,M,E)$ where for an agent $i$ and good $j$, $(i,j)\in E$ iff $j\in \MBB_i$. Such edges are called \textit{MBB} edges. Given an accompanying allocation $\x$, we supplement $G$ to include \textit{allocation edges}, an edge between $i$ and $j$ if $j\in\x_i$.  

For agents $i_0,\dots,i_{\ell}$ and goods $j_1,\dots,j_{\ell}$, a path $P = (i_0, j_1, i_1, j_2, \dots, j_{\ell}, i_{\ell})$ in the MBB graph, where for all $1\le \ell' \le \ell$, $j_{\ell'} \in \x_{i_{\ell'}}\cap \MBB_{i_{\ell'-1}}$, is called a \textit{special} path. We define the \textit{level} $\lambda(h; i_0)$ of an agent $h$ w.r.t. $i_0$ to be half the length of the shortest special path from $i_0$ to $h$, and to be $n$ if no such path exists. A path $P = (i_0, j_1, i_1, j_2, \dots, j_{\ell}, i_{\ell})$ is an \textit{alternating path} if it is special, and if $\lambda(i_0; i_0) < \lambda(i_1; i_0) < \dots < \lambda(i_\ell; i_0)$, i.e., the path visits agents in increasing order of their level w.r.t. $i_0$. Further, the edges in an alternating path alternate between allocation and MBB edges. Typically, we consider alternating paths starting from a least spender (LS) agent. We use alternating paths to reduce the pEF1-envy of agent $i$ towards agent $h$ by transferring goods along the alternating path $(i_0, j_1, i_1, j_2, \dots, j_{\ell}, i_{\ell})$, i.e., by transferring $j_\ell$ to $i_{\ell-1}$, etc. The structure of an alternating path ensures that transfers are done along MBB edges, implying that resulting allocations remain on MBB and hence preserve the fPO property of the allocation. Figure~\ref{fig:altpath} provides an example of an alternating path from agent $i_0$ to agent $i_3$ involving agents $i_1$ and $i_2$ and goods $j_1,j_2$, and $j_3$.

\begin{figure}[h]
    \centering
    \includegraphics[scale=0.6]{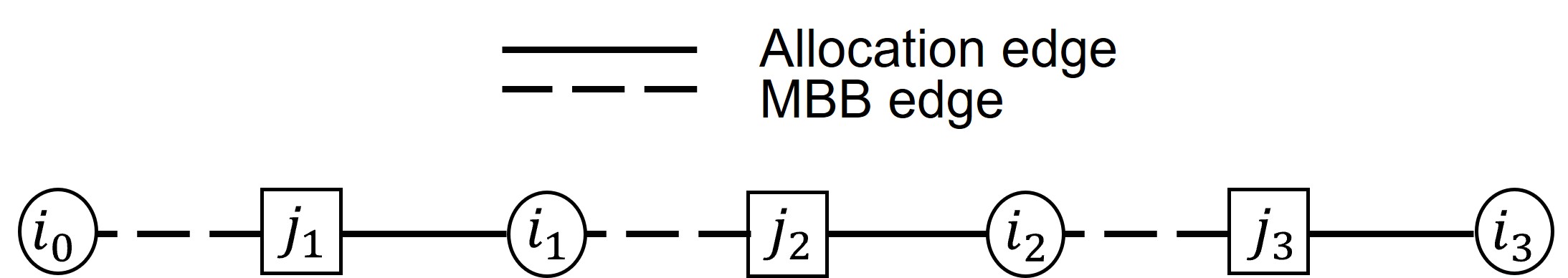}
    \caption{Example of an alternating path}
    \label{fig:altpath}
\end{figure}

\begin{definition}[Component $C_i$ of a least spender $i$]\label{def:component}
For a least spender $i$, define $C_i^{\ell}$ as the set of all goods and agents that lie on alternating paths of length $\ell$. Call $C_i = \bigcup_{\ell} C_i^{\ell}$ the \textit{component} of $i$, the set of all goods and agents reachable from the least spender $i$ through alternating paths.
\end{definition}

We now illustrate the terms introduced in the above definitions through an example.

\begin{example}\label{ex:example}\normalfont
Consider a fair division instance $(N,M,V)$ with three agents $\{a_1,a_2,a_3\}$ and five goods $\{g_1,\dots,g_5\}$. The values are given by:

\begin{table}[h]
    \centering
    \begin{tabular}{|c||c|c|c|c|c|}\hline
              & $g_1$ & $g_2$ & $g_3$ & $g_4$ & $g_5$ \\\hline
        $a_1$ & 6 & 4 & 0 & 0 & 0 \\
        $a_2$ & 0 & 4 & 2 & 5 & 0 \\
        $a_3$ & 4 & 3 & 1 & 4 & 2\\\hline
    \end{tabular}
    \caption{A fair division instance}
    \label{tab:example}
\end{table}

Consider the allocation $\x$ given by $\x_1 = \{g_1,g_2\}, \x_2 = \{g_3,g_4\}, \x_3 = \{g_5\}$ with associated price vector $\p = (6,4,2,5,2)$ defining the prices of the goods. The agents spending are given by $\p(\x_1) = 10$, $\p(\x_2) = 7$, and $\p(\x_3) = 2$. It can be checked that the MBB ratios of all agents are one and that the allocation is on MBB. Figure~\ref{fig:ex-mbb} describes the MBB graph associated with the allocation $(\x,\p)$. In the graph, $(a_2, g_2, a_1)$ is an alternating path. The least spender (LS) is $a_3$. The component of the LS is the agent $a_3$ and the good $g_5$.

\begin{figure}[h]
    \centering
    \includegraphics[scale=0.6]{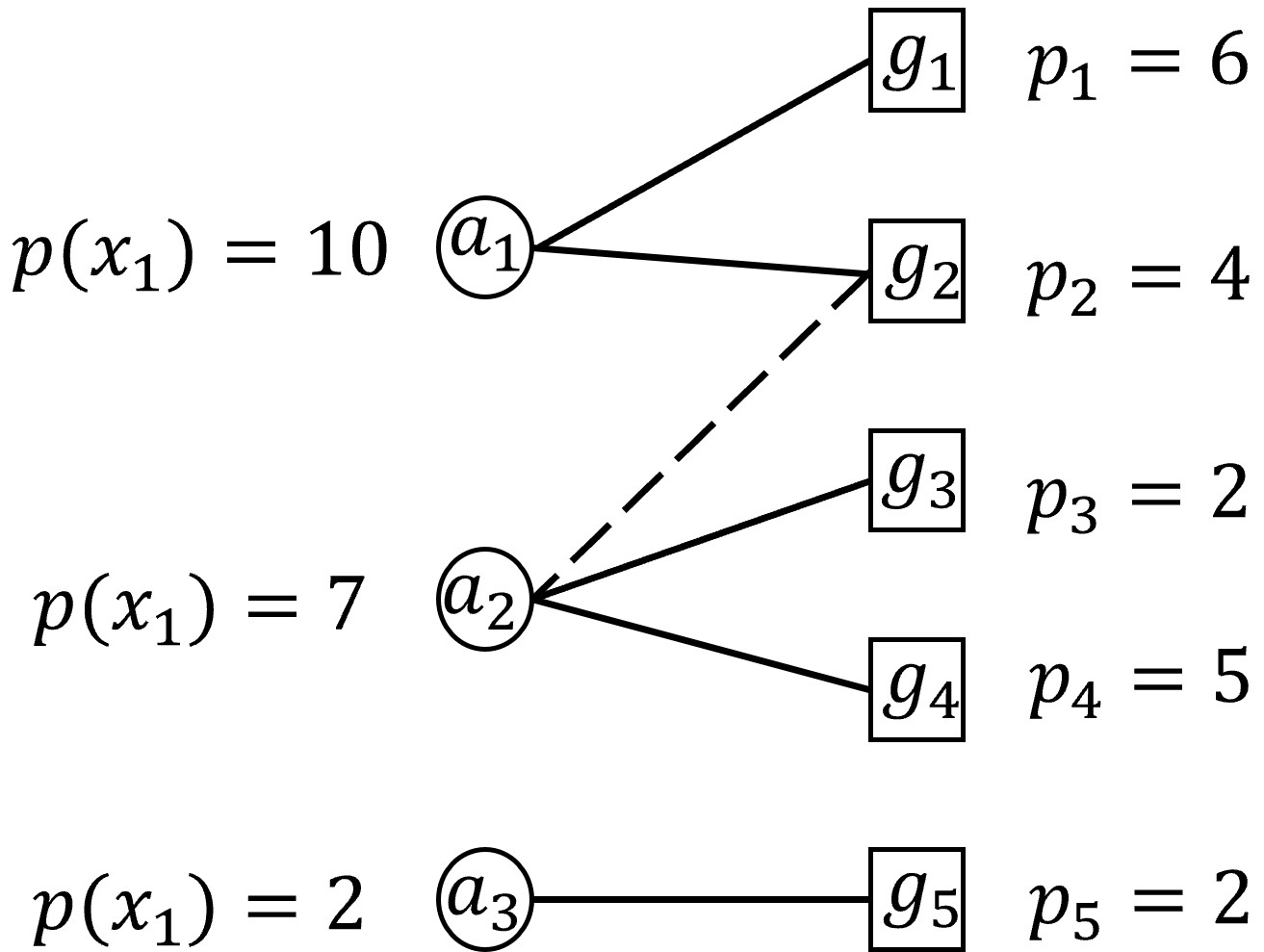}
    \caption{An allocation $(\x,\p)$ and the associated MBB graph}
    \label{fig:ex-mbb}
\end{figure}
\end{example}

\section{Finding EF1+fPO allocations of goods}\label{sec:ef1po}
We now present the main algorithm of our paper. We show that given a fair division instance $(N,M,V)$, Algorithm~\ref{alg:ef1po} returns an allocation $\x$ that is EF1 and fPO and terminates in time $\poly{n,m,U}$. Recall that $U$ denotes the maximum number of distinct utility values an agent can achieve in any allocation.

Algorithm~\ref{alg:ef1po} starts with a welfare maximizing integral allocation $(\x,\p)$, where $p_j = v_{ij}$  for $j \in \x_i$. If the allocation $\x$ is not EF1, then the outcome $(\x,\p)$ cannot be pEF1. Thus, there must exist a pEF1-violator agent. Our algorithm aims to reduce pEF1-envy by transferring goods away from such pEF1-violator agents. Let $L$ denote the set of least spenders. We first explore if a pEF1-violator is reachable via alternating paths starting from some least spender (LS), i.e., belongs to the component $C_i$ of some LS $i$. If not, then we \textit{raise the prices} of all goods in $C_L$, the union of components of all least spenders, until either (i) a new MBB edge gets added from an agent $h\in C_L$ to a good $j\notin C_L$ (corresponding to a price-rise of $\gamma_1$), or (ii) the spending of an agent $h\notin C_L$ becomes equal to the spending of the agents in $L$, i.e., a new agent becomes an LS (corresponding to a price-rise of $\gamma_2$).

Otherwise, we choose the minimum-index least spender $i$ whose component $C_i$ contains a pEF1-violator. We find an alternating path $P = (i, j_1, i_1, \dots, j_{\ell}, i_{\ell} = h)$, from $i$ to a pEF1-violator $h$, and it must be the case that $\p(\x_h\setminus\{j_{\ell}\}) > \p(\x_i)$. We then \textit{transfer} the good $j_\ell$ from $h$ to $i_{\ell-1}$. The choice of the minimum-index LS is to break ties among all LSs whose component contains a pEF1-violator. This ensures that there are no back-to-back transfers, where good $j$ transferred from agent $h$ to $h'$ via an alternating path starting from LS $i$ and immediately in the next step from $h'$ to $h$ via an alternating path starting from another LS $i'$, as our algorithm will only choose one LS $\min\{i,i'\}$. Moreover, this ensures that while the component $C_i$ contains pEF1-violators for the minimum-index LS $i$, we only perform transfers from the pEF1-violators in $C_i$.

Our algorithm repeatedly performs either price-rise steps or transfer steps until the outcome $(\x,\p)$ becomes pEF1. Before showing that the algorithm always terminates and analyzing its run-time, we illustrate its execution through an example.

\begin{algorithm}[t]
\caption{Computing an EF1+fPO allocation of goods}\label{alg:ef1po}
\textbf{Input:} Fair division instance
$(N,M,V)$\\
\textbf{Output:} An integral EF1+PO allocation $\x$
\begin{algorithmic}[1]
\State $(\x,\p) \gets$ Initial welfare maximizing integral market allocation, where $p_j = v_{ij}$ for $j \in \x_i$.
\While{$(\x, \p)$ is not pEF1}
\State $L \gets \{k \in N : k \in \arg\min_{h\in N} \p(\x_h)\}$ \Comment{set of LS}
\If{for all $i\in L$, there is no pEF1-violator in $C_i$} \Comment{Perform price-rise step}
\State $\gamma_1 = \min_{h\in C_L \cap N, j\in M\setminus C_L} \frac{\alpha_h}{v_{hj}/p_j}$ \Comment{Factor by which prices of goods in $C_L$ are raised until a new MBB edge appears from an agent in $C_L$ to a good outside $C_L$}
\State $\gamma_2 = \min_{i\in L, h\in N\setminus C_L} \frac{\p(\x_h)}{\p(\x_i)}$ \Comment{Factor by which prices of goods in $C_L$ are raised until a new agent outside $C_L$ becomes a new LS}
\State $\beta = \min(\gamma_1,\gamma_2)$ \Comment{Prise rise factor}
\For{$j \in C_L\cap M$} \Comment{Raise prices of goods in $C_L$}
\State $p_j \gets \beta p_j$
\EndFor
\Else \Comment{Perform transfer step}
\State $i \gets \min\{k\in L: \text{ there is a pEF1-violator in $C_k$}\}$ \Comment{LS with smallest index}
\State Let $(i, j_1, i_1, \dots, i_{\ell-1}, j_\ell, i_\ell)$ be an alternating path from $i$ to pEF1-violator $i_\ell$
\State Transfer $j_\ell$ from $i_\ell$ to $i_{\ell-1}$
\EndIf
\EndWhile
\Return $\x$
\end{algorithmic}
\end{algorithm}

\begin{example} \normalfont
We revisit the instance in Example~\ref{ex:example}, and begin with welfare maximizing allocation described in Example~\ref{ex:example}, i.e., $(\x,\p)$ where $\x_1 = \{g_1,g_2\}, \x_2 = \{g_3,g_4\}, \x_3 = \{g_5\}$ and $\p = (6,4,2,5,2)$. Allocation $\x$ is not EF1 since $v_3(\x_3) = 2 < 3 = v_3(\x_1 \setminus g_1)$. Indeed, $(\x,\p)$ is not pEF1 either, with $a_3$ pEF1-envying $a_1$, i.e., $a_1$ is a pEF1-violator. Since there is no alternating path from $a_3$ (the least spender) to $a_1$ (pEF1-violator), Algorithm~\ref{alg:ef1po} increases the price of $g_5$. A new MBB edge appears from $a_3$ to $g_4$ on a price-rise factor of $\gamma_1 = \frac{v_{35}/p_{5}}{v_{34}/p_4} = 1.25$, and a new LS appears on a price-rise factor of $\gamma_2 = \p(\x_2)/\p(\x_3) = 3.5$. Thus, the price of $g_5$ will be raised by $\gamma = \min(\gamma_1,\gamma_2) = 1.25$, i.e., the new price is $p_5 = 2.5$, and there an MBB edge from $a_3$ to $g_4$ is now present.

Since the allocation is not yet pEF1, Algorithm~\ref{alg:ef1po} checks if a pEF1-violator ($a_3$) belongs to the component of an LS ($a_1$). Since this is the case, Algorithm~\ref{alg:ef1po} identifies the shortest alternating path $(a_3,g_4,a_2,g_2,a_1)$ and makes the transfer of $g_2$ from $a_1$ to $a_2$. The new allocation is now $\x'$ given by $\x'_1 = \{g_1\}, \x'_2 = \{g_2,g_3,g_4\}, \x'_3 = \{g_5\}$. Since the allocation is still not pEF1 as $a_3$ pEF1-envies $a_2$, the algorithm transfers $g_4$ from the pEF1-violator $a_2$ to $a_3$. This results in the allocation $\x''$ given by $\x''_1 = \{g_1\}, \x''_2 = \{g_2,g_3\}, \x''_3 = \{g_4,g_5\}$, which is pEF1, and thus Algorithm~\ref{alg:ef1po} terminates with an EF1+PO allocation.
\end{example} 

We now proceed to analyze the run-time of the Algorithm~\ref{alg:ef1po}. We will use the terms \textit{time-step} or \textit{iteration} interchangeably to denote either a transfer or a price-rise step. We say `at time-step $t$' to refer to the state of the algorithm \textit{just before} the event at $t$ happens. We denote by $(\x^{t},\p^t)$ the allocation and price vector at time-step $t$. First, we note that: 

\begin{lemma}\label{lem:alloc-on-mbb}
At any time-step $t$, $(\x^t,\p^t)$ is on MBB.
\end{lemma}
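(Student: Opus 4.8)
The plan is to prove Lemma~\ref{lem:alloc-on-mbb} by induction on the time-step $t$, showing that the MBB property (namely $\x^t_i \subseteq mbb_i$ for every agent $i$) is preserved by both types of operations the algorithm performs: a transfer (Line 4) and a price rise (Lines 9--13). The base case is the initialization at Line 1, where I would argue that the welfare-maximizing initial allocation with $p_j = v_{ij}$ for $j \in \x_i$ is on MBB essentially by construction: each good is priced at the value of its owner, so the bang-per-buck of an owned good is exactly $1$, and I need to check that this is the \emph{maximum} bang-per-buck for that agent. This should follow from the welfare-maximizing choice, since assigning good $j$ to the agent valuing it most (equivalently, ensuring $\alpha_{ij} = v_{ij}/p_j \le 1$ for all $j$) makes every owned good an MBB good.

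\textbf{For the inductive step} I would treat the two operations separately. The transfer step is the easier of the two: the path $P$ is an \emph{alternating} path, so by definition the transferred good $j_\ell$ satisfies $j_\ell \in \x_{i_\ell} \cap \mBB_{i_{\ell-1}}$ — that is, $j_\ell$ is already an MBB good for the recipient $i_{\ell-1}$. Since prices do not change during a transfer and no other agent's bundle is altered, the recipient's bundle remains inside its MBB-set, the loser $i_\ell$ only loses a good (which cannot violate the subset condition), and all other agents are untouched. Hence the MBB property is trivially maintained.

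\textbf{The price-rise step is where the main work lies,} and I expect it to be the principal obstacle. Here the prices of exactly the goods in $C_L \cap M$ are scaled up by a common factor $\beta = \min(\gamma_1, \gamma_2) \ge 1$, while goods outside $C_L$ keep their prices. I would argue that for an agent $h \in C_L \cap N$, \emph{all} goods allocated to $h$ lie inside $C_L$ — this is the crucial structural fact, following from the definition of the component $C_i$ as everything reachable via alternating paths, so that an agent in the component has its entire bundle priced up by the same factor $\beta$. Consequently the bang-per-buck of every good $h$ owns is scaled by $1/\beta$ uniformly, and since $h$'s own MBB goods are among them, $h$'s MBB-ratio becomes $\alpha_h / \beta$; I must then verify that no good outside $C_L$ becomes strictly more attractive to $h$ than its owned goods. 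This is exactly what the definition of $\gamma_1$ guarantees: $\beta \le \gamma_1$ ensures that the scaling stops precisely at the moment a new MBB edge from $C_L$ to outside first appears, so before that point owned goods retain their MBB status and no cross-edge strictly dominates. For agents $h \notin C_L$, neither their prices nor the prices of their (outside) goods change, so their MBB property is preserved as long as I check that the owned goods of an outside agent stay outside $C_L$; this again follows from the component structure. Assembling these cases completes the induction, with the delicate point being the careful bookkeeping that an agent's entire bundle is priced consistently with its component membership.
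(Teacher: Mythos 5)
Your proposal is correct and takes essentially the same route as the paper's own proof: induction over time-steps, with transfer steps preserving MBB because goods move along MBB edges of the alternating path, and price-rise steps preserving MBB because the component structure ensures each agent's bundle is scaled uniformly (or not at all) while $\beta \le \gamma_1$ prevents any good outside $C_L$ from overtaking the scaled MBB ratio of an agent in $C_L$. You work out the base case and the structural facts about $C_L$ in more detail than the paper does, but the underlying argument is the same.
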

\begin{proof}
We want to show that at every iteration of the algorithm, every agent owns goods from their MBB set. To see this, notice that this is the case in the algorithm's initial allocation in Line 1. Suppose we assume inductively that at iteration $t$, in the corresponding allocation $\x$, every agent buys MBB goods. We ensure that the goods are transferred along MBB edges, and thus, no transfer step causes the MBB condition of any agent to be violated. Consider a price-rise step, in which the prices of all goods in $C_L$, the component of the least spenders $L$, are increased by a factor of $\beta$. Since prices of these goods are raised, no agent $h \notin C_L$ prefers these goods over their own goods, and hence, they continue to own goods from their MBB set. For any agent $h\in C_L$ and a good $j\notin C_L$, the bang-per-buck that $j$ gets from $h$ before the price-rise is strictly less than her MBB ratio since $j$ is not in the MBB set of $h$. Further, we never raise these prices beyond the creation of a new MBB edge from any agent $h \in C_L$ to some agent $j\in C_L$. Thus, the MBB condition is not violated for any agent $h\in C_L$. 

In conclusion, the MBB condition is never violated for any agent throughout the algorithm, so the allocation is always on MBB.
\end{proof}

If Algorithm~\ref{alg:ef1po} terminates, then the final outcome $(\x,\p)$ is pEF1. Since it is also on MBB, by Lemma~\ref{lem:pEF1impliesEF1}, $\x$ is EF1+fPO. We now proceed towards the run-time analysis of Algorithm~\ref{alg:ef1po}. First, we observe that prices are raised only until the spending of a new agent becomes equal to the spending of the least spenders.
\begin{lemma}\label{lem:LSspending}
The spending of the least spender(s) does not decrease as the algorithm progresses. Further, at any price-rise event $t$ with price-rise factor $\beta$, the spending of the least spender(s) increases by a factor of $\beta$.
\end{lemma}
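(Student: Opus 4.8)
The plan is to analyze the two kinds of events—transfers and price rises—separately, and to establish the quantitative claim (the factor-$\beta$ increase at a price rise) first, since monotonicity at price rises is an immediate consequence of it. Throughout, fix the event at time-step $t$ and let $m_0 = \p(\x_i)$ denote the common spending of the least spenders $i \in L$ just before the event, which by the definition of $L$ in Line~2 equals the minimum spending over all agents.

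For a price-rise event, the heart of the matter is the structural claim that $\x_i \subseteq C_L \cap M$ for every $i \in L$. I would argue this directly from Definition~\ref{def:component}: each good $j \in \x_i$ is incident to $i$ via an allocation edge, and such an edge is the initial segment of an alternating path rooted at the least spender $i$, so $j$ is reachable from $i$ and hence lies in $C_i \subseteq C_L$. Here Lemma~\ref{lem:alloc-on-mbb} is used implicitly to guarantee $\x_i \subseteq \mBB_i$, so that these holdings are genuine MBB-consistent edges. Granting the claim, every good held by a least spender has its price scaled by exactly $\beta$ in the \textbf{for}-loop, so the new spending of $i$ is $\sum_{j \in \x_i} \beta p_j = \beta\,\p(\x_i)$, which is the asserted factor-$\beta$ increase. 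Since $\beta = \min(\gamma_1,\gamma_2)$ and both $\gamma_1 \ge 1$ (as $\alpha_h \ge v_{hj}/p_j$ for every good) and $\gamma_2 \ge 1$ (as $\p(\x_h) \ge m_0$ for all $h$ by minimality), we have $\beta \ge 1$, so the least-spender spending does not decrease at a price rise.

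For a transfer event, suppose $j_\ell$ is moved from $i_\ell$ to $i_{\ell-1}$ along an alternating path $(i, j_1, \dots, j_\ell, i_\ell)$ rooted at a least spender $i$. I would show that after the transfer \emph{every} agent's spending is at least $m_0$, which forces the new minimum to be at least $m_0$. The only agents whose spending changes are $i_{\ell-1}$, whose spending strictly increases (it gains $j_\ell$) and so remains $\ge m_0$, and $i_\ell$, whose new spending is $\p(\x_{i_\ell} \setminus \{j_\ell\})$; but Line~3 performs the transfer precisely when $\p(\x_{i_\ell} \setminus \{j_\ell\}) > \p(\x_i) = m_0$, so $i_\ell$ stays strictly above $m_0$ as well. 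Every other agent is untouched and was already $\ge m_0$ by minimality. Hence the minimum spending cannot drop across a transfer, and combining with the previous paragraph completes the monotonicity claim.

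The main obstacle is the structural claim $\x_i \subseteq C_L$ for least spenders: one must argue cleanly from the alternating-path/component definition that the goods currently held by a least spender are all swept into $C_L$ and therefore repriced by $\beta$, since any good escaping $C_L$ would leave the spending scaled by a factor strictly below $\beta$ and break the lemma. I would make this fully explicit by tracing the reachability in Definition~\ref{def:component} from $i$ along its allocation edges, so that the set of repriced goods in $\x_i$ is pinned down exactly; the remaining steps are routine bookkeeping on the two spending quantities.
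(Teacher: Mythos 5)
Your transfer-case argument is correct and complete, and your structural claim that $\x_i \subseteq C_L$ for every least spender $i\in L$ (hence the exact factor-$\beta$ scaling of each least spender's own spending) is also sound. The genuine gap is in what the factor-$\beta$ claim has to mean for this lemma to serve its purpose: ``the spending of the least spender(s)'' is the time-varying quantity $\min_{h\in N}\p(\x_h)$, and the paper later uses (in inequality~\eqref{eqn:sparse-term-3} inside the proof of Lemma~\ref{lem:utilinc}) that this \emph{minimum} is multiplied by $\beta$ at a price-rise step. Your proof only shows that the agents who were least spenders before the step have their spending scaled by $\beta$; that bounds the new minimum from \emph{above} by $\beta m_0$, not from below (your monotonicity argument only gives the much weaker lower bound $m_0$). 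You never rule out that some other agent ends up with spending strictly between $m_0$ and $\beta m_0$ after the repricing, which would make the new minimum smaller than $\beta m_0$ and break the downstream use. Concretely, an agent $h\in C_L\setminus L$ whose old spending is slightly above $m_0$ and who owns even one good outside $C_L$ would have her spending scaled by a factor strictly less than $\beta$, and could land below $\beta m_0$.

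Closing the gap requires two ingredients that are absent from your write-up. First, component closure for \emph{all} agents of $C_L$, not just the least spenders: every good that is MBB for an agent of $C_L$ lies in $C_L$, and every good of $C_L$ is owned by an agent of $C_L$; combined with the allocation being on MBB this gives $\x_h\subseteq C_L$ for every $h\in C_L\cap N$, so every such agent's spending scales by exactly $\beta$ and hence stays at least $\beta m_0$. Second, the definition of $\gamma_2$ must be used quantitatively, not merely to get $\beta\ge 1$: since $\beta\le\gamma_2$, every agent $h\notin C_L$ satisfies $\p(\x_h)\ge \gamma_2\, m_0\ge \beta m_0$, and her spending is untouched by the repricing. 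Together these facts show the new minimum is exactly $\beta m_0$, attained by the old least spenders, which is the statement actually needed. For comparison, the paper offers no written proof of this lemma; its one-sentence justification --- that prices are raised only until the spending of a new agent equals that of the least spenders --- is precisely this second ingredient, so your attempt and the paper's implicit argument each emphasize a different half of what a complete proof must contain.
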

\begin{proof}
The spending of a least spender can decrease if the least spender loses a good. This cannot happen since the only agents who lose goods are pEF1-violators, whose spending after removing one good exceeds that of the least spender. Moreover, if a good $j$ is transferred from a pEF1-violator $h$ directly to a LS $i$, then we have $\p(\x_h\setminus \{j\}) > \p(\x_i)$. Thus, even if $h$ becomes a new LS after the transfer, the spending of the LS has only increased. Finally, note that in a price-rise step with price-rise factor $\beta$, the prices of all goods in $C_L$ are increased by a factor of $\beta$. Thus, the spending of every agent in $C_L$, including the least spenders, increases by a factor of $\beta$.
\end{proof}

\noindent Next, we argue:
\begin{lemma}\label{lem:swap-poly}
The number of iterations with the same set of least spenders is $\poly{m,n}$. 
\end{lemma}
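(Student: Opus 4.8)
The plan is to hold the least-spender set $L$ fixed and bound separately the two kinds of iterations that can occur, namely price-rise steps and transfer steps. For the price rises, recall that the factor is $\beta=\min(\gamma_1,\gamma_2)$. If $\gamma_2$ is binding, the spending of the least spenders is raised until some $h\notin C_L$ matches it; this enlarges $L$ and so ends the phase we are analyzing, hence at most one $\gamma_2$-step occurs while $L$ is fixed. Every other price rise is a $\gamma_1$-step, which by the definition of $\gamma_1$ creates a new MBB edge from an agent of $C_L$ to a good outside $C_L$, so after recomputing from Line 2 that good (and its owner) is drawn into the component. I would argue that while $L$ is unchanged the component $C_L$ can only grow: transfers push goods toward the least spenders, i.e.\ onto strictly shorter alternating paths, so nothing leaves $C_L$, and $\gamma_1$-steps only add to it. Thus each $\gamma_1$-step strictly increases $|C_L|$, and since $|C_L|\le n+m$ there are at most $n+m$ price-rise steps with $L$ fixed.

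For the transfer steps, note that between two consecutive price rises the prices, and hence the whole MBB graph, are frozen, so the levels $\lambda(\cdot)$ relative to $L$ are well defined and change only through reallocation. I would introduce the potential $\Phi=\sum_{j\in M}\lambda_j$, where $\lambda_j$ denotes the level of the agent currently owning $j$, taken to be $n$ when no alternating path reaches that agent; then $0\le\Phi\le nm$. A transfer moves the terminal good $j_\ell$ of a shortest alternating path from $i_\ell$ at level $\ell$ to its predecessor $i_{\ell-1}$ at level $\ell-1$, so the contribution of $j_\ell$ to $\Phi$ drops by exactly one. Provided no other agent's level increases, $\Phi$ strictly decreases at every transfer, so at most $nm$ transfers occur between two consecutive price rises.

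The crux is precisely this monotonicity of levels under a single transfer, and it is the step I expect to be the main obstacle. The natural approach is a rerouting argument: any shortest alternating path that routed through the allocation edge $(i_\ell,j_\ell)$ can be redirected through the new owner $i_{\ell-1}$, whose level is strictly smaller, while adding the allocation edge $(i_{\ell-1},j_\ell)$ can only introduce shortcuts and never lengthen a path; every other good keeps its owner, so its level cannot rise. The delicate point is to verify that, because $j_\ell$ was chosen along a \emph{shortest} alternating path, removing it from $i_\ell$ cannot strand $i_\ell$ (or any downstream agent) at a strictly larger level. Granting this monotonicity, I would combine the two bounds: at most $n+m$ price rises, each separated by at most $nm$ transfers, giving $O\big((n+m)\,nm\big)=\poly{n,m}$ iterations while $L$ is unchanged.
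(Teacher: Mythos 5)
Your decomposition (bound the price-rise steps, then bound the transfers between them) matches the skeleton of the paper's argument, but both of your key lemmas rest on a monotonicity claim that is false, and it is exactly the point you flagged as the crux. Concretely: levels \emph{can} increase under a transfer, and agents \emph{can} leave $C_L$. Consider a least spender $i_0$ owning a good $j_0$; an agent $i_1$ at level $1$ owning $j_1\in mbb_{i_0}$; and an agent $i_2$ at level $2$ owning $j_2\in mbb_{i_1}$ together with goods $j_3,j_4,j_5$ that are MBB for $i_2$ only (say all prices are $1$ except $p_{j_1}=2$; suitable integral values realizing these MBB sets are easy to write down). This is a legal on-MBB state, $i_0$ is the least spender, and $\p(\x_{i_2}\setminus\{j_2\})=3>1=\p(\x_{i_0})$, so the algorithm transfers $j_2$ from $i_2$ to $i_1$ along the (unique, hence shortest) alternating path. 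After the transfer no good owned by $i_2$ is MBB for any agent reachable from $i_0$, so $i_2$ becomes unreachable: its level jumps from $2$ to $n$, and $i_2,j_3,j_4,j_5$ all drop out of $C_L$. Your potential changes by $-1+3(n-2)>0$, i.e., $\Phi$ strictly \emph{increases}. Shortestness of the chosen path controls the levels of agents \emph{on} the path, but says nothing about agents whose only access routes passed through the deleted allocation edge $(i_\ell, j_\ell)$; those can be stranded. The same example breaks your price-rise bound: since $C_L$ can shrink via transfers between two price rises, ``each $\gamma_1$-step strictly increases $|C_L|$'' no longer yields a bound of $n+m$ price rises per phase.

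The paper's proof avoids level dynamics entirely. It fixes the set $L$ and counts \emph{transfer signatures}: a transfer is specified by the LS agent $i\in L$ at the head of the path, the agent $k$ losing the good, the good $j$, and the recipient $h$, giving at most $n\cdot n\cdot m\cdot n=n^3m$ possibilities, each of which occurs at most once per phase in the absence of price rises; separately, at most $n$ price-rise steps can occur without changing $L$. If you want to rescue a potential-function argument, it would have to be robust to agents falling out of the component --- for instance, arguing that a transferred good can never return to its former owner within a phase --- which is in essence what the signature-counting formulation accomplishes.
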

\begin{proof}
Let us fix a set $L$ of least spenders. We first argue that there can be at most $n$ price-rise steps without any change in $L$. This is because of the following. Since $L$ is unchanged, every price-rise step can only create a new MBB edge since a new LS will change $L$. Thus, each price-rise step causes a new agent to be added to $C_L$. Since there are at most $n$ agents outside $C_L$ at any given iteration, there can only be $n$ price-rise steps.

We now argue that the number of transfer steps with the same set of LS between any two price steps is at most $\poly{m,n}$. Recall that our algorithm always uses the minimum-index least spender whose component contains a pEF1-violator. Because of this, all transfers are performed in a component $C_i$ before moving to another component $C_{\ell}$, for agents $i,\ell \in L$ with $i<\ell$.  Thus, if a transfer from a pEF1-violator agent $h$ to LS $\ell$ is performed, then $h$ can not be in $C_i$ at the time of transfer. \cite{Barman18FFEA} used a potential function argument to show that the number of consecutive transfer steps with a fixed LS agent is at most $m\cdot n^2$. Thus, with our observation above, we can conclude that the number of consecutive transfers with the same set of LS is at most $m\cdot n^3 = \poly{m,n}$.
\end{proof} 

The next lemma is key. We argue that between the time steps at which an agent $i$ ceases to be an LS and subsequently becomes an LS again, her utility strictly increases.

\begin{lemma}\label{lem:utilinc}
Let $t_0$ be a time-step where agent $i$ ceases to be an LS, and let $t_\ell$ be the first subsequent time step just after which $i$ becomes the LS again. Then:
\[ v_i(\x_i^{t_\ell+1}) > v_i(\x_i^{t_0}) \]
Note here that $v_i(\x_i^{t_0})$ is the utility of agent $i$ \emph{just before} time-step $t_0$, and $v_i(\x_i^{t_\ell+1})$ her utility \emph{just after} time-step $t_\ell$.
\end{lemma}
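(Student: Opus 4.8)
The plan is to track a single scalar potential for agent $i$ and prove it is monotone. Let $s(t)$ denote the spending of the least spender(s) at time-step $t$ (the common minimum of the spendings $\p^t(\x_h^t)$ over $h\in N$), and let $\alpha_i^t$ be $i$'s MBB ratio at $t$. I would work with the potential
\[ \phi(t) := \alpha_i^t\cdot s(t). \]
The first point is that $\phi$ equals $i$'s utility exactly when $i$ is a least spender: by Lemma~\ref{lem:alloc-on-mbb} the outcome is on MBB, so $v_i(\x_i^t)=\alpha_i^t\,\p^t(\x_i^t)$, and a least spender has $\p^t(\x_i^t)=s(t)$. Since $i$ is a least spender just before $t_0$ and just after $t_\ell$, this gives $v_i(\x_i^{t_0})=\phi(t_0)$ and $v_i(\x_i^{t_\ell+1})=\phi(t_\ell+1)$, so it suffices to prove $\phi(t_\ell+1)>\phi(t_0)$.

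Second, I would show $\phi$ is non-decreasing across every event. The crucial simplification is that $\alpha_i^t=\max_j v_{ij}/p_j^t$ depends only on the prices, not on the allocation, so a \emph{transfer} never changes $\alpha_i$; combined with Lemma~\ref{lem:LSspending} (which gives that $s$ is non-decreasing, hence constant-or-increasing at a transfer), $\phi$ cannot decrease at a transfer. At a \emph{price-rise} step with factor $\beta$, Lemma~\ref{lem:LSspending} gives $s\mapsto\beta s$. If $i\in C_L$, then $i$'s MBB goods lie in $C_L$ and their prices scale by $\beta$, so $\alpha_i\mapsto\alpha_i/\beta$ (the stopping rule $\gamma_1$ guarantees no good outside $C_L$ overtakes during the rise), leaving $\phi$ unchanged; if $i\notin C_L$, then $\alpha_i$ is unchanged and $\phi$ scales up by $\beta$. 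Thus $\phi$ is monotone non-decreasing on $[t_0,t_\ell+1]$.

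The heart of the argument, and the step I expect to be the main obstacle, is upgrading monotonicity to a \emph{strict} increase. I would argue by contradiction: if $\phi$ were constant on $[t_0,t_\ell+1]$, then (a) no price-rise with $i\notin C_L$ occurs and (b) no transfer strictly increases $s$, since either would strictly increase $\phi$. I then examine how $i$ can re-enter the least-spender set at $t_\ell$. A price rise cannot do it: if $i\in C_L$ its spending scales together with $s$ and it stays a non-LS, while if $i\notin C_L$ the rise is excluded by~(a). A transfer cannot do it either: just before $t_\ell$ agent $i$ is a non-LS with $\p(\x_i)>s$, so the only way $i$ attains the new minimum is for that minimum to equal $i$'s post-event spending, which is still $>s$ — this holds when $i$ is uninvolved, and when $i$ is the donor the transfer condition $\p(\x_i\setminus\{j\})>s$ forces its post-spending strictly above $s$ (and $i$ cannot become the minimum by \emph{receiving} a good, as that only raises its spending). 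In every case the new minimum strictly exceeds the old, so $s$ strictly increases, contradicting~(b). Hence $\phi$ is not constant; being non-decreasing it strictly increases somewhere, yielding $v_i(\x_i^{t_\ell+1})=\phi(t_\ell+1)>\phi(t_0)=v_i(\x_i^{t_0})$. The delicate point to get exactly right is this last donor-spending claim: it is precisely the transfer condition that prevents $i$ from silently re-entering the least-spender set without any genuine progress in $s$, and thus what rules out equality.
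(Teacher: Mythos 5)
Your proof is correct, but it is organized quite differently from the paper's. The paper argues along agent $i$'s own trajectory: it anchors at the last time-step $t_k$ at which $i$ loses a good, chains the spending inequalities \eqref{eqn:sparse-term-1}--\eqref{eqn:sparse-term-3} through products of price-rise factors to obtain \eqref{eqn:sparse-term-4}, and then cancels those factors against the exact scaling of $i$'s MBB ratio in \eqref{eqn:sparse-term-5}; a separate, easier case handles the situation where $i$ never loses a good, strictness there coming from $v_{ij}>0$ for the good received at $t_0$. You instead run a potential-function argument on $\phi(t)=\alpha_i^t\, s(t)$: exact invariance under price rises that touch $i$, weak monotonicity at all other events, endpoint identities $\phi(t_0)=v_i(\x_i^{t_0})$ and $\phi(t_\ell+1)=v_i(\x_i^{t_\ell+1})$ because $i$ is a least spender exactly there, and strictness extracted by contradiction at the re-entry event $t_\ell$ (a price rise with $i\in C_L$ preserves the spending gap, while a price rise with $i\notin C_L$ or a transfer that re-admits $i$ forces a strict jump in $\phi$ or in $s$). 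The two arguments rest on the same ingredients---Lemma~\ref{lem:alloc-on-mbb}, Lemma~\ref{lem:LSspending}, the transfer condition, and the fact that a price rise scales $\alpha_i$ by exactly $1/\beta$ when $i\in C_L$ and leaves it unchanged when $i\notin C_L$---but your decomposition eliminates the $\beta$-product bookkeeping and the case split on whether $i$ ever loses a good, at the price of a more delicate exhaustive analysis of how $i$ can re-enter the least-spender set; your donor case is precisely the paper's inequality \eqref{eqn:sparse-term-2}, applied at $t_\ell$ rather than at $t_k$. Two points you should make explicit: (i) every price-rise factor satisfies $\beta>1$ (otherwise a rise with $i\notin C_L$ need not strictly increase $\phi$), which holds because $\gamma_1,\gamma_2>1$ by the closure properties of $C_L$; and (ii) your claim that $\alpha_i$ is unchanged when $i\notin C_L$ uses that such an agent owns, and hence has MBB, goods only outside $C_L$. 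These are the same structural facts about $C_L$ that the paper itself invokes in Lemma~\ref{lem:alloc-on-mbb} and in \eqref{eqn:sparse-term-5}, so relying on them is not a gap relative to the paper's own level of rigor.
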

\begin{proof}
From Lemma~\ref{lem:LSspending}, since $i$ ceases to be an LS after time-step $t_0$, $i$ must have received some good $j$ at time step $t_0$. Since $j\in \MBB_i$ at $t_0$, $v_{ij} > 0$. Suppose $i$ does not lose any good in any subsequent iterations until $t_\ell$, then $\x^{t_\ell+1}_i \supseteq \x^{t_0}_i \cup \{j\}$, and hence $v_i(\x_i^{t_\ell+1}) \ge v_i(\x_i^{t_0}\cup \{j\}) = v_i(\x_i^{t_0}) + v_{ij} > v_i(\x_i^{t_0})$, using additivity of valuations.

On the other hand, suppose $i$ does lose some goods between $t_0$ and $t_\ell$. Let $t_k \in (t_0,t_\ell]$ be the last time-step when $i$ loses a good, say $j'$. Let $t_1,\dots,t_{k-1}$ be time-steps (in order) between $t_0$ and $t_k$ when $i$ experiences price-rise, and $t_{k+1},\dots,t_{\ell-1}$ be time-steps (in order) between $t_k$ and $t_\ell$ when $i$ experiences price-rise, until finally after the iteration $t_\ell$ agent $i$ becomes the LS again. Let us define $\beta_t$ to be the price-rise factor at the time-step $t$. If $t$ is a price-rise step, $\beta_t > 1$, else we set $\beta_t = 1$. Hence $\beta_{t_1},\dots,\beta_{t_{k-1}},\beta_{t_{k+1}},\dots,\beta_{t_{\ell-1}}$ are price-rise factors at the corresponding events $t_1,\dots,t_{k-1},t_{k+1},\dots,t_{\ell-1}$ and are all greater than 1. If $t_\ell$ is a price-rise event, let the price-rise factor be $\beta_{t_\ell} > 1$; and if not let $\beta_{t_\ell} = 1$. Note that $t_k$ is not a price-rise event; hence, $\beta_{t_k} = 1$.

Using Lemma~\ref{lem:LSspending}, together with the fact that $i$ does not lose any good after $t_k$, we have:
\begin{equation}\label{eqn:sparse-term-1}
\p^{t_\ell+1}(\x_i^{t_\ell+1}) \ge  (\beta_{t_\ell}\beta_{t_{\ell-1}}\cdots\beta_{t_{k+1}})\p^{t_k}(\x^{t_k}_i\setminus\{j'\}).
\end{equation}
The above may not be equality because in addition to experiencing price-rises during $t_{k+1},\dots,t_\ell$, agent $i$ may also \textit{gain} some new good.
If $i_k$ is a LS at $t_k$, then for agent $i$ to lose the good $j'$ it must be the case that:
\begin{equation}\label{eqn:sparse-term-2}
\p^{t_k}(\x^{t_k}_i \setminus \{j'\}) > \p^{t_k}(\x^{t_k}_{i_k}).
\end{equation}
Let $i_t$ be a LS at time-step $t$. Then by repeatedly applying Lemma~\ref{lem:LSspending}, we get:
\begin{equation}\label{eqn:sparse-term-3}
\begin{aligned}
\p^{t_k}(\x^{t_k}_{i_k}) &\ge \beta_{t_k-1}p^{t_k-1}(\x^{t_k-1}_{i_{t_k-1}}) \\
&\ge\dots
\ge (\beta_{t_k-1}\beta_{t_k-2}\cdots\beta_1)\p^{t_0}(\x^{t_0}_{i_{t_0}}) \\
&\ge (\beta_{t_{k-1}}\beta_{t_{k-2}}\cdots\beta_{t_1})\p^{t_0}(\x^{t_0}_{i})\enspace, 
\end{aligned}
\end{equation}
where the last transition follows from the facts that (i) each $\beta_t \ge 1$, (ii) $\{t_1,\dots,t_{k-1}\}\subseteq \{1,\dots,t_k-1\}$, and (iii) $i_{t_0} = i$ since $i$ is a least spender at $t_0$. Putting \eqref{eqn:sparse-term-1}, \eqref{eqn:sparse-term-2} and \eqref{eqn:sparse-term-3} together, we get: \begin{equation}\label{eqn:sparse-term-4}
\p^{t_\ell+1}(\x_i^{t_\ell+1}) > (\Pi_{r=1}^\ell \beta_{t_r}) \p^{t_0}(\x^{t_0}_{i})\enspace . 
\end{equation}
Let $\alpha^t_i$ denote the MBB-ratio of $i$ at the time step $t$. Observe that in every price-rise event with price-rise factor $\beta$, the MBB ratio of any agent experiencing the price-rise decreases by $\beta$. Further, the MBB ratio of any agent does not change unless she experiences a price-rise step. Thus:
\begin{equation}\label{eqn:sparse-term-5}
    \alpha_i^{t_\ell+1} = \frac{\alpha_i^{t_0}}{(\beta_{t_{\ell}}\beta_{t_{\ell-1}}\cdots\beta_{t_{k+1}})(\beta_{t_{k-1}}\beta_{t_{k-2}}\cdots\beta_{t_1})}\enspace . 
\end{equation}
Therefore, using the fact that the allocation is on MBB edges, and with~\eqref{eqn:sparse-term-4} and~\eqref{eqn:sparse-term-5}, we have:
\begin{alignat*}{2}
v_i(\x_i^{t_\ell+1}) &= \alpha_i^{t_\ell+1} \p^{t_\ell+1}(\x^{t_\ell+1}_i) &\text{($\x^{t_\ell+1}$ is on MBB)}&\\
&> \frac{\alpha_i^{t_0}}{(\Pi_{r=1}^\ell \beta_{t_r})}(\Pi_{r=1}^\ell \beta_{t_r})\p^{t_0}(\x^{t_0}_{i}) &\text{(From~\eqref{eqn:sparse-term-4} and \eqref{eqn:sparse-term-5})}&\\
&= \alpha^{t_0}_i\p^{t_0}(\x_i^{t_0})
= v_i(\x_i^{t_0}), &\text{($\x^{t_0}$ is on MBB)}&
\end{alignat*}
as claimed.
\end{proof}

Using the above lemmas, we show:
\begin{lemma}\label{lem:ef1po-runtime}
Algorithm~\ref{alg:ef1po} terminates in time $\poly{n,m,U}$.
\end{lemma}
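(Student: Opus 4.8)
\textbf{Proof plan for Lemma~\ref{lem:ef1po-runtime}.}
The plan is to bound the total number of iterations as a product of two quantities: the number of distinct ``phases'' during which the set $L$ of least spenders stays fixed, and the number of iterations occurring within a single phase. Lemma~\ref{lem:swap-poly} already bounds the latter by $\poly{n,m}$, so the crux is to show that $L$ changes only $\poly{n,U}$ many times. Multiplying the two bounds yields $\poly{n,m,U}$ iterations, and since each iteration (building the MBB/allocation graph, finding an alternating path, and computing $\gamma_1,\gamma_2,\beta$) can be implemented in $\poly{n,m}$ time, the claimed running time follows.

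To bound the number of phase changes, I would show that each agent $i$ enters the set of least spenders at most $U_i+1$ times. Fixing $i$, I would track the successive time-steps $t_0^{(1)} < t_0^{(2)} < \cdots$ at which $i$ ceases to be a least spender and argue that the utilities $v_i(\x_i^{t_0^{(s)}})$ form a strictly increasing sequence. The strict increase across a full out-and-back cycle is exactly Lemma~\ref{lem:utilinc}, which gives $v_i(\x_i^{t_\ell+1}) > v_i(\x_i^{t_0})$ with $t_\ell$ the re-entry step. To chain these inequalities I would supplement the lemma with the observation that while $i$ is a least spender its bundle only grows: in any transfer step only the path endpoint $i_\ell$ loses a good, and a least spender can never be such an endpoint, since the transfer condition $\p(\x_{i_\ell}\setminus\{j_\ell\}) > \p(\x_{i_0})$ would force $\p(\x_i\setminus\{j_\ell\}) > \p(\x_i)$, which is impossible as prices are nonnegative; price rises, meanwhile, do not change ownership. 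Hence $v_i$ is non-decreasing throughout each spell as a least spender, so $v_i(\x_i^{t_0^{(s+1)}}) \ge v_i(\x_i^{t_\ell^{(s)}+1}) > v_i(\x_i^{t_0^{(s)}})$, giving the desired strictly increasing sequence.

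Since $v_i(\x_i)$ takes at most $U_i$ distinct values over all (integral) allocations, a strictly increasing sequence of such utilities has length at most $U_i$; thus $i$ ceases to be a least spender at most $U_i$ times and therefore enters $L$ at most $U_i+1 \le U+1$ times. Summing over agents, the total number of entries into $L$ is at most $n(U+1)$, and the number of departures is at most this plus the $n$ initial memberships, so $L$ changes at most $O(nU)$ times. Consequently there are $O(nU)$ phases, and combining with Lemma~\ref{lem:swap-poly} bounds the total iteration count by $O(nU)\cdot\poly{n,m} = \poly{n,m,U}$.

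The step I expect to be the main obstacle is establishing that a least spender never loses a good during its spell, which supplies the within-spell monotonicity needed to chain the per-cycle inequalities of Lemma~\ref{lem:utilinc} into one strictly increasing utility sequence; the rest is bookkeeping on top of the three preceding lemmas. A secondary point to treat carefully is the off-by-one accounting between ``ceasing to be a least spender'' and ``entering'' events, and confirming that the $U_i$ distinct integral utility values genuinely upper-bound the length of the increasing sequence.
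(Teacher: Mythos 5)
Your proposal is correct and follows essentially the same route as the paper's proof: Lemma~\ref{lem:utilinc} plus the bound $U_i$ on distinct utility values limits how often each agent can re-enter the least-spender set, and Lemma~\ref{lem:swap-poly} bounds the iterations within each phase of fixed $L$. Your supplementary observation that a least spender never loses a good (so its utility is non-decreasing during a spell, allowing the per-cycle inequalities to be chained) is a correct detail that the paper's terser argument leaves implicit, and it is exactly the right way to make the chaining rigorous.
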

\begin{proof}
Consider any agent $i$. From Lemma~\ref{lem:utilinc}, it is clear that every time $i$ becomes the LS again her utility has strictly increased compared to her utility the last time she was an LS. The number of utility values that $i$ can have is $U_i$; hence, we conclude that the number of times she stops being an LS and becomes LS again is at most $U_i$. Since there are $n$ agents, and each agent $i$ can become the LS again at most $U_i$ times, we have that after $\poly{n, \max_{i\in N} U_i}$ changes in the set of least spenders, there will be no changes further in the set of least spenders. After this, in at most $n$ more price-rise steps, either the allocation becomes pEF1, or all agents get added to $C_L$ since no new agent becomes an LS on raising prices. Further, the number of transfers with the same set of least spenders is at most $\poly{m,n}$ (Lemma~\ref{lem:swap-poly}). This shows that Algorithm~\ref{alg:ef1po} terminates in time $\poly{n,m,U}$.
\end{proof}
Putting it all together, we conclude:
\begin{theorem}\label{thm:ef1po-main}
Let $I = (N,M,V)$ be a fair division instance. Then, an allocation that is both EF1 and fPO can be computed in time $\poly{n,m,U}$.
\end{theorem}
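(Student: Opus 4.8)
The plan is to assemble the theorem from the chain of lemmas already established, since it is a ``putting it all together'' statement about Algorithm~\ref{alg:ef1po}. Two things need to be verified: first, \emph{correctness}---whenever the algorithm halts, its output is EF1 and fPO; and second, \emph{termination with the claimed runtime}---the algorithm halts after $\poly{n,m,U}$ iterations, each of which is polynomially bounded.

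For correctness, I would first observe that the algorithm can only return at Line~7, and does so only after verifying that the current market outcome $(\x,\p)$ is pEF1. By Lemma~\ref{lem:alloc-on-mbb}, every intermediate outcome $(\x^t,\p^t)$, and in particular the returned one, is on MBB. Hence the returned outcome is an integral market outcome on MBB that is pEF1, and Lemma~\ref{lem:pEF1impliesEF1} then immediately yields that $\x$ is both EF1 and fPO. This settles correctness.

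For the runtime, I would invoke Lemma~\ref{lem:ef1po-runtime} directly. Its argument tracks, for each agent $i$, how many times $i$ can re-enter the set of least spenders: by Lemma~\ref{lem:utilinc}, each re-entry strictly increases $v_i(\x_i)$, and since $i$ realizes at most $U_i \le U$ distinct utility values, this can happen at most $U_i$ times. Summing over the $n$ agents bounds the number of distinct least-spender sets by $\poly{n,U}$. Once the least-spender set stabilizes, at most $n$ further price-rise steps occur before the outcome becomes pEF1 or all agents join $C_L$ (since no new agent can become a least spender upon raising prices), and by Lemma~\ref{lem:swap-poly} each fixed least-spender set admits only $\poly{n,m}$ transfer steps. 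Multiplying these bounds gives the $\poly{n,m,U}$ iteration count, and each iteration (path search, transfer, or price-rise computation) is itself polynomial-time.

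The genuinely delicate step---already carried out in Lemma~\ref{lem:utilinc}---is establishing that an agent's utility strictly increases between consecutive stints as a least spender; this is the only place where the potential argument could fail, since a naive count of transfers and price-rises need not terminate. The subtle point there is converting the multiplicative spending growth of Lemma~\ref{lem:LSspending} (as in~\eqref{eqn:sparse-term-4}) together with the exactly compensating MBB-ratio decay (as in~\eqref{eqn:sparse-term-5}) into a clean strict increase in \emph{value} rather than merely in spending. Granting that lemma, the theorem follows by the straightforward bookkeeping above.
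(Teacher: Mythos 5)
Your proposal is correct and follows the paper's proof exactly: the paper also obtains correctness by noting the algorithm returns only when $(\x,\p)$ is pEF1, combining Lemma~\ref{lem:alloc-on-mbb} with Lemma~\ref{lem:pEF1impliesEF1} to get EF1+fPO, and obtains the runtime from Lemma~\ref{lem:ef1po-runtime} (which in turn rests on Lemmas~\ref{lem:utilinc} and~\ref{lem:swap-poly} in just the way you describe). Your identification of Lemma~\ref{lem:utilinc} as the delicate step matches the paper's own emphasis.
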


Observe that in any allocation and for any agent, the minimum utility is 0, and the maximum utility is $mv_{max}$, where $v_{max} = \max_{i,j} v_{ij}$. Since the utility values are integral, we have $U \le mv_{max} + 1$. Thus, Algorithm 1 computes an EF1+fPO allocation in pseudo-polynomial time.

\begin{theorem}\label{thm:ef1po-pseudopoly}
Given a fair division instance $I = (N,M,V)$, an allocation that is both EF1 and fPO can be computed in time $\poly{n,m,v_{max}}$, where $v_{max} = \max_{i,j} v_{ij}$. In particular, when $v_{max} \le \poly{m,n}$, an EF1+fPO allocation can be computed in $\poly{m,n}$ time.
\end{theorem}

The guarantee of EF1+fPO offered by our algorithm is stronger than the guarantee of EF1+PO provided by the algorithm of Barman et al. \cite{Barman18FFEA}. 
We next turn our attention to $k$-ary instances where $k$ is a constant. First, we observe that for such instances, the maximum number of different utility values any agent can get is at most $\poly{m}$.
\begin{lemma}\label{lem:sparse-poly-utils}
In a $k$-ary fair division instance $(N,M,V)$ with constant $k$, $U \le \poly{m}$.
\end{lemma}
\begin{proof}
For any agent $i$, let $\{v_i^\ell\}_{\ell \in [k]}$ be the different utility values $i$ has for the goods. In an allocation $\x$, let $m_i^\ell \in \Z_{\ge 0}$ be the number of goods in $\x_i$ with value $v_i^\ell$. Then, agent $i$'s utility is simply:
$v_i(\x_i) = m_i^1 v_i^1 + \dots + m_i^k v_i^k$. Since each $0 \le m_i^\ell \le m$, the number of possible utility values that $i$ can get in any allocation is at most $(m+1)^k$, which is $\poly{m}$ since $k$ is constant. Thus $U \le \poly{m}$.
\end{proof}

Therefore, using Lemma~\ref{lem:sparse-poly-utils}, Theorem~\ref{thm:ef1po-main} gives:

\begin{theorem}\label{thm:ef1po-sparse}
Given a $k$-ary fair division instance $I = (N,M,V)$ where $k$ is a constant, an allocation that is both EF1 and fPO can be computed in time $\poly{m,n}$.
\end{theorem}

\begin{remark}\noindent
\normalfont We note that our techniques can also be used to show that an allocation that is weighted-EF1 (wEF1) and fPO exists and can be computed in pseudo-polynomial time. Here, an allocation $\x$ is wEF1 if for all agents $i,h$ we have $\frac{v_i(\x_i)}{w_i} \ge \frac{v_i(\x_h\setminus j)}{w_h}$ for some $j\in \x_h$, where $w_k$ denotes the weight of agent $k$. We can modify Algorithm~\ref{alg:ef1po} for the weighted setting by considering weighted-LS instead of LS, as the agents with minimum $\frac{\p(\x_i)}{w_i}$, and performing transfer from agent $h$ if it violates weighted-pEF1, i.e., if $\frac{\p(\x_h\setminus j)}{w_h} > \frac{\p(\x_i)}{w_i}$ for every $j\in \x_h$ and a weighted-LS $i$. Lemmas~\ref{lem:alloc-on-mbb} and \ref{lem:swap-poly} still hold as before. Lemma~\ref{lem:LSspending} holds with weighted-spending instead of spending. With these modifications, the key  Lemma~\ref{lem:utilinc} can be shown as before, thus implying pseudo-polynomial run time for computing an EF1+fPO allocation in the weighted case as well.
\end{remark}

\section{Finding EQ1+fPO allocations of goods}\label{sec:eq1po}
We now show that Algorithm~\ref{alg:eq1po} finds an EQ1+fPO allocation given a fair division instance with positive values. We require the values to be positive because instances with zero values might not even admit an allocation that is EQ1+PO \cite{freeman2019eqxpo}. Algorithm~\ref{alg:eq1po} is similar to Algorithm~\ref{alg:ef1po}, except that it works with values instead of spendings of agents since we desire EQ1 allocations instead of EF1.

Algorithm~\ref{alg:eq1po} starts with a welfare maximizing integral allocation $(\x,\p)$, where $p_j = v_{ij}$  for $j \in \x_i$. We refer to an agent with the least utility as an LU agent and let $L$ be the set of LU agents. If the allocation $\x$ is not EQ1, then exist an EQ1-violator agent $h$, i.e., $v_h(\x_h\setminus \{j\}) > v_i(\x_i)$, for any $j\in \x_h$ and any LU agent $i$. Similar to Algorithm~\ref{alg:ef1po}, we first explore if such an EQ1-violator belongs to the component $C_i$ of some LU agent $i$. If not, \textit{we raise the prices} of all goods in $C_L$ until a new MBB edge gets added from an agent $h\in C_L$ to a good $j\notin C_L$.

Otherwise, we choose a minimum-index LU agent $i$ whose component $C_i$ contains an EQ1-violator. We find an alternating path $P = (i, j_1, i_1, \dots, j_{\ell}, i_{\ell} = h)$, from $i$ to a EQ1-violator $h$, and it must be the case that $v(\x_h\setminus\{j_{\ell}\}) > v(\x_i)$. We then \textit{transfer} the good $j_\ell$ from $h$ to $i_{\ell-1}$.

Thus, the algorithm performs price-rise or transfer steps while the allocation is not EQ1. If the algorithm terminates, the allocation must be EQ1 (Line 2). By arguments similar to Lemma~\ref{lem:alloc-on-mbb}, we can show that the outcome throughout the execution of the algorithm is always on MBB and hence is fPO.

\begin{algorithm}[t] 
\caption{Computing an EQ1+fPO allocation of goods}
\label{alg:eq1po}
\textbf{Input:} Positive-valued fair division instance
$(N,M,V)$\\
\textbf{Output:} An integral EQ1+PO allocation ${\x}$
\begin{algorithmic}[1]
\State $(\x,\p) \gets$ Initial welfare maximizing integral market allocation, where $p_j = v_{ij}$ for $j \in \x_i$.
\While{$\x$ is not EQ1}
\State $L\gets \{i \in N : i \in \s{argmin}_{h\in N} v_h(\x_h)\}$ \Comment{set of LU agents}
\If{for all $i\in L$, there is no EQ1-violator in $C_i$} \Comment{Perform price-rise step}
\State $\beta = \min_{h\in C_L\cap N, j\in M\setminus C_L} \frac{\alpha_h}{v_{hj}/p_j}$ \Comment{Factor by which prices of goods in $C_L$ are raised until a new MBB edge appears from an agent in $C_L$ to a good outside $C_L$}
\For{$j \in C_L\cap M$}
\State $p_j \gets \beta p_j$
\EndFor
\Else \Comment{Perform transfer step}
\State $i \gets \min\{k\in L: \text{there is an EQ1-violator in $C_k$}\}$ \Comment{LU with smallest index}
\State Let $(i, j_1, i_1, \dots, i_{\ell-1}, j_\ell, i_\ell)$ be an alternating path from $i$ to EQ1-violator $i_\ell$
\State Transfer $j_\ell$ from $i_\ell$ to $i_{\ell-1}$
\EndIf
\EndWhile
\Return $\x$
\end{algorithmic}
\end{algorithm}

We now prove that the algorithm terminates. Similar to Lemma~\ref{lem:LSspending}, we can argue that the utility of the LU agents(s) does not decrease in the algorithm's run. Further, similar to Lemma~\ref{lem:utilinc}, we can show the following key lemma:
\begin{lemma}\label{lem:utilinc-eq1}
Let $t_0$ be a time-step where agent $i$ ceases to be an LU agent, and let $t_\ell$ be the first subsequent time step just after which $i$ becomes the LU agent again. Then:
\[
v_i(\x_i^{t_\ell+1}) > v_i(\x_i^{t_0}). 
\]
\end{lemma}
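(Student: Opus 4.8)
The plan is to mirror the structure of the proof of Lemma~\ref{lem:utilinc}, while exploiting a crucial simplification of the EQ1 setting: the utility $v_h(\x_h)=\sum_{j\in\x_h}v_{hj}$ of any agent depends only on the allocation and is completely unaffected by price-rise steps. Thus every agent's utility (in particular that of the LU agents) changes only at transfer steps, and the price-rise factors $\beta_t$ that complicated the spending-based argument in Lemma~\ref{lem:utilinc} play no role here.

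First I would establish the analogue of Lemma~\ref{lem:LSspending}: the minimum utility $\min_{h}v_h(\x_h)$ is non-decreasing over the run of Algorithm~\ref{alg:eq1po}. Price-rise steps leave all utilities unchanged, so it suffices to check transfer steps. At a transfer along an alternating path $(i', j_1, \dots, j_\ell, i_\ell)$ from an LU agent $i'$, exactly one good $j_\ell$ moves from $i_\ell$ to $i_{\ell-1}$: the recipient $i_{\ell-1}$ only increases its utility, while the agent $i_\ell$ that loses $j_\ell$ satisfies the triggering condition $v_{i_\ell}(\x_{i_\ell}\setminus\{j_\ell\}) > v_{i'}(\x_{i'})=\min_h v_h(\x_h)$, so its post-transfer utility still strictly exceeds the current minimum. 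Hence no transfer lowers the minimum utility, and this same computation shows that the agent losing a good at any transfer is never an LU agent, either before or after that step.

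Next I would pin down what happens at $t_0$. Since the minimum utility is non-decreasing and changes only at transfers, $i$ can cease to be an LU agent only at a transfer step, and (by the previous paragraph) not by losing a good. As exactly one agent gains a good per transfer, $i$ must be the recipient $i_{\ell-1}$, gaining a good $j$ with $v_{ij}>0$ (by positivity of values, or since $j\in mbb_i$). I then split into two cases. If $i$ never loses a good during $(t_0,t_\ell]$, then $\x_i^{t_\ell+1}\supseteq \x_i^{t_0}\cup\{j\}$ and additivity gives $v_i(\x_i^{t_\ell+1})\ge v_i(\x_i^{t_0})+v_{ij}>v_i(\x_i^{t_0})$.

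Otherwise, let $t_k\in(t_0,t_\ell]$ be the last step at which $i$ loses a good, say $j'$, and let $i'$ be an LU agent at $t_k$. Because $i$ loses no good after $t_k$, we have $\x_i^{t_\ell+1}\supseteq \x_i^{t_k}\setminus\{j'\}$, hence $v_i(\x_i^{t_\ell+1})\ge v_i(\x_i^{t_k}\setminus\{j'\})$. The triggering condition at $t_k$ gives $v_i(\x_i^{t_k}\setminus\{j'\})>v_{i'}(\x_{i'}^{t_k})=\min_h v_h(\x_h^{t_k})$, and monotonicity of the minimum together with $i$ being an LU agent at $t_0$ gives $\min_h v_h(\x_h^{t_k})\ge \min_h v_h(\x_h^{t_0})=v_i(\x_i^{t_0})$. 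Chaining these inequalities yields $v_i(\x_i^{t_\ell+1})>v_i(\x_i^{t_0})$, as claimed. The main thing to get right is the bookkeeping of the second and third paragraphs — that exactly one agent gains and one loses at each transfer, that the loser is never an LU agent, and hence that $i$ genuinely gains a good at $t_0$; once these are in place, the invariance of utilities under price rises makes the chain of inequalities immediate, with no price-rise factors to track.
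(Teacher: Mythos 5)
Your proof is correct and follows essentially the same route as the paper's: the same two-case split on whether $i$ loses a good in $(t_0,t_\ell]$, and in the second case the identical chain $v_i(\x_i^{t_\ell+1})\ge v_i(\x_i^{t_k}\setminus\{j'\}) > \min_h v_h(\x_h^{t_k}) \ge v_i(\x_i^{t_0})$, with your explicit monotonicity-of-the-minimum-utility argument playing the role of the analogue of Lemma~\ref{lem:LSspending} that the paper invokes. One harmless nitpick: your side-claim that the agent losing a good is never an LU agent \emph{after} the transfer is not needed (and can in fact fail when the unique LU agent is the recipient); only the ``before'' direction is used in your argument.
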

\begin{proof}
From Lemma~\ref{lem:LSspending}, $i$ must have received some good $j$ at time step $t_0$. Since $j\in \MBB_i$ at $t_0$, $v_{ij} > 0$. Suppose $i$ does not lose any good in any subsequent iterations, then $\x^{t_\ell+1}_i \supseteq \x^{t_0}_i \cup \{j\}$, and hence $v_i(\x_i^{t_\ell+1}) > v_i(\x_i^{t_0})$. On the other hand, suppose $i$ does lose some goods. Let $t_k\in(t_0,t_\ell]$ be a subsequent time step where $i$ loses a good $j'$ for the last time. Since $i$ does not lose any good after $t_k$, we have:
\begin{equation}\label{eqn:sparse-term-1-eq1}
v_i(\x_i^{t_\ell+1}) \ge  v_i(\x^{t_k}_i\setminus\{j'\}).
\end{equation}
If $i_k$ is a LU at $t_k$, then for $i$ to lose $j'$ it must be that:
\begin{equation}\label{eqn:sparse-term-2-eq1}
v_i(\x^{t_k}_i\setminus \{j'\}) > v_{i_{t_k}}(\x^{t_k}_{i_k}).
\end{equation}
Finally, since the utility of an LU agent does not decrease: 
\begin{equation}\label{eqn:sparse-term-3-eq1}
v_{i_{t_k}}(\x^{t_k}_{i_k}) \ge v_i({\x^{t_0}_i}).
\end{equation}
Putting~\eqref{eqn:sparse-term-1-eq1}, \eqref{eqn:sparse-term-2-eq1} and \eqref{eqn:sparse-term-3-eq1} together, we get:
\[ v_i(\x_i^{t_\ell+1}) > v_i(\x_i^{t_0}) \]
as required.
\end{proof}

Using the above, as argued in Lemma~\ref{lem:ef1po-runtime}, we can show:
\begin{lemma}\label{lem:eq1po-runtime}
Algorithm~\ref{alg:eq1po} terminates in time $\poly{n,m,U}$.
\end{lemma}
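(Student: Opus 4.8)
The plan is to transplant the run-time analysis of Lemma~\ref{lem:ef1po-runtime} to the equitability setting, substituting the set $L$ of least-utility (LU) agents for the set of least spenders and using Lemma~\ref{lem:utilinc-eq1} in place of Lemma~\ref{lem:utilinc}. The one place where the argument genuinely changes is the bound on price-rise steps, because the device that controls them in the EF1 analysis (the factor $\gamma_2$, which forces a new least spender to appear) is absent from Algorithm~\ref{alg:eq1po}.

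First I would bound how often $L$ can change. By Lemma~\ref{lem:utilinc-eq1}, whenever an agent $i$ stops being an LU agent and later becomes one again, its utility $v_i(\x_i)$ has strictly increased; since $i$ realizes at most $U_i$ distinct utility values, it can re-enter $L$ at most $U_i$ times. Summing over the $n$ agents, $L$ changes at most $\sum_{i} U_i \le nU = \poly{n,U}$ times, after which it stays fixed for the remainder of the run.

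Next I would bound the iterations performed while $L$ is fixed, i.e.\ prove the EQ1 analog of Lemma~\ref{lem:swap-poly}. The transfers are counted exactly as there: each transfer is charged to a choice of an LU agent $i\in L$, the agent owning the transferred good, the good itself, and the agent receiving it, giving at most $n\cdot n\cdot m\cdot n = n^3m$ transfers in any stretch free of price rises. For the price rises the EF1 proof uses $\gamma_2$; here a price rise leaves $\x$---and hence every utility $v_h(\x_h)$ and the set $L$ itself---unchanged, so that mechanism is unavailable. Instead I would argue that while $L$ is fixed the component $C_L$ is monotone non-decreasing: each $\gamma_1$ price rise adds a new MBB edge from an agent of $C_L$ to a good outside $C_L$ and thereby draws at least one fresh vertex into $C_L$, while a transfer only moves a good along an alternating path inside $C_L$ and so removes no vertex from it. Since $|C_L| \le n+m$, at most $n+m$ price rises occur for each fixed $L$; when $C_L$ finally exhausts $N\cup M$ no good lies outside it, so $\gamma_1$ is undefined and the algorithm either halts at Line~7 (the allocation is EQ1) or makes a transfer along a violating path inside $C_L$. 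With at most $n+m$ price rises there are at most $n+m+1$ maximal stretches of transfers, each of length at most $n^3m$, for a total of $\poly{n,m}$ iterations while $L$ is fixed.

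Combining the two bounds, $L$ changes at most $\poly{n,U}$ times and each fixed-$L$ phase lasts $\poly{n,m}$ iterations, so Algorithm~\ref{alg:eq1po} performs $\poly{n,m,U}$ iterations; since each iteration (recomputing $L$, searching for a violating alternating path, and performing a transfer or a price rise) costs $\poly{n,m}$ time, the total run-time is $\poly{n,m,U}$. I expect the main obstacle to be exactly the price-rise bound: since price rises never alter utilities and hence never alter $L$, the EF1 counting fails, and the crux is to verify carefully that $C_L$ is monotone non-decreasing throughout a fixed-$L$ phase---strictly growing at every $\gamma_1$ rise and never shrinking at a transfer---so that component growth can play the role that the least-spender dynamics play in Lemma~\ref{lem:ef1po-runtime}.
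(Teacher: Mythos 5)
Your overall strategy matches the paper's: you bound the number of changes to the LU set by $\sum_i U_i \le nU$ via Lemma~\ref{lem:utilinc-eq1}, and you bound the work done while $L$ is fixed by the path-counting of Lemma~\ref{lem:swap-poly}, exactly as the paper does. You are also right to single out the one place where the EQ1 setting genuinely differs: Algorithm~\ref{alg:eq1po} has no $\gamma_2$, and a price rise changes no utility, hence never changes $L$, so the EF1-style justification for bounding price-rise steps cannot be imported verbatim. The paper dispatches this point with the phrase ``an analysis similar to Lemma~\ref{lem:swap-poly}''; you attempt to supply the missing argument explicitly.

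However, the argument you supply rests on a claim that is false: that a transfer ``removes no vertex from $C_L$.'' Membership in $C_L$ is generated by exactly two rules: a good enters because it is MBB for an agent already in $C_L$, and an agent enters because she \emph{owns} a good already in $C_L$. A transfer deletes an ownership edge --- precisely the kind of edge that justifies agents' membership. Concretely, take LU agent $a$ with $\x_a=\emptyset$ and agent $b$ with $\x_b=\{g_1,g_2\}$, where $g_1\in mbb_a$ but $g_2\notin mbb_a$ (e.g.\ $v_a=(3,1)$, $v_b=(4,4)$, initial prices $(4,4)$). Then $C_L=\{a,g_1,b,g_2\}$ and the path $(a,g_1,b)$ is violating since $v_b(\x_b\setminus\{g_1\})=4>0$; after $g_1$ is transferred to $a$, the component of $a$ is $\{a,g_1\}$, so agent $b$ and good $g_2$ have dropped out (one can pad this instance with further agents and goods so that the run continues afterwards). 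In fact the true monotonicity statements go the other way: a price-rise step strictly enlarges $C_L$, while a transfer can only \emph{shrink} it, since the only edge it adds points to $i_{\ell-1}$, who is already in $C_L$. Consequently your bound of $n+m$ price rises per fixed-$L$ phase does not follow: vertices can leave $C_L$ through transfers and be re-added by later price rises, and your counting gives no bound on how many times this can repeat. This is a genuine gap --- indeed it is exactly the step you yourself flagged as the crux --- and closing it requires a potential argument that handles the interleaving of transfers and price rises jointly, not the asserted invariance of $C_L$.
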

\begin{proof}
Consider any agent $i$. From Lemma~\ref{lem:utilinc-eq1}, it is clear that every time an agent $i$ becomes a LU agent again, her utility has strictly increased compared to her utility the last time she was a LU agent. The number of utility values that $i$ can have is $U_i$, and hence we conclude that the number of times she stops being an LU agent and becomes an LU agent again is at most $U_i$. Since there are $n$ agents, and each agent $i$ can become the LU agent again at most $U_i$ times, we have that after $\poly{n, \max_{i\in N} U_i}$ changes in the set of LU agents, there will be no changes further in the set of LU agents. Further, using an analysis similar to Lemma~\ref{lem:swap-poly}, we can show that the number of transfers with the same set of LU agents is at most $\poly{m,n}$. This shows that Algorithm~\ref{alg:ef1po} terminates in time $\poly{n,m,U}$.
\end{proof}

We conclude:
\begin{theorem}\label{thm:eq1po-main}
Let $I = (N,M,V)$ be a positive-valued fair division instance. Then, an allocation that is both EQ1 and fPO can be computed in time $\poly{n,m,U}$.
\end{theorem}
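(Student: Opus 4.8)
The plan is to obtain Theorem~\ref{thm:eq1po-main} as an assembly of the preceding lemmas, exactly parallel to how Theorem~\ref{thm:ef1po-main} follows for the EF1 case. Two things must be shown about Algorithm~\ref{alg:eq1po}: \emph{correctness}, that any allocation it returns is both EQ1 and fPO, and \emph{termination within the claimed bound}, which is precisely Lemma~\ref{lem:eq1po-runtime}. So the genuine work is the correctness argument, after which I would simply invoke Lemma~\ref{lem:eq1po-runtime} for the running time.

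For correctness I would first establish the invariant that $(\x^t,\p^t)$ is always on MBB, mirroring Lemma~\ref{lem:alloc-on-mbb}: the initial welfare-maximizing allocation with $p_j=v_{ij}$ for $j\in\x_i$ is trivially on MBB; transfers move goods only along MBB edges and hence preserve the invariant; and each price rise is capped by the factor $\beta$ of Line 8 precisely so that no agent in $C_L$ loses an MBB good, while agents outside $C_L$ only see the prices of goods they do not own rise and therefore keep buying from their own MBB sets. Given that $(\x,\p)$ is on MBB at every step, I would reuse the equilibrium argument of Lemma~\ref{lem:pEF1impliesEF1}: setting budgets $e_i=\p(\x_i)$ makes $(\x,\p)$ a market equilibrium of the Fisher instance $(N,M,V,e)$ — the market clears, all budgets are exhausted, and spending is on MBB goods — so by the First Welfare Theorem $\x$ is fPO at every step, in particular at termination. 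EQ1 of the returned allocation is then immediate from the stopping rule: Algorithm~\ref{alg:eq1po} returns $\x$ only through Line 7, which is reached exactly when $\x$ is EQ1. Hence the output is simultaneously EQ1 and fPO, and combining this with the $\poly{n,m,U}$ iteration bound of Lemma~\ref{lem:eq1po-runtime} gives the theorem.

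One point I would flag explicitly is where positivity of the instance is used, since the statement restricts to positive-valued instances. It enters in two places: it guarantees $v_{ij}>0$ for any good $i$ receives along an MBB edge, which is exactly what makes the utility increase in Lemma~\ref{lem:utilinc-eq1} \emph{strict} (and thus caps the number of times an agent can re-become a least-utility agent by $U_i$), and it is also what ensures an EQ1+fPO allocation exists at all, since zero-valued instances may admit no EQ1+PO allocation. I expect the only delicate step to be verifying MBB-invariance under the simultaneous price rise on all of $C_L$; everything else is a direct assembly of Lemmas~\ref{lem:alloc-on-mbb}, \ref{lem:pEF1impliesEF1}, \ref{lem:utilinc-eq1}, and~\ref{lem:eq1po-runtime}.
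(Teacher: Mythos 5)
Your proposal is correct and follows essentially the same route as the paper: the paper likewise establishes the theorem by noting that the MBB invariant (argued as in Lemma~\ref{lem:alloc-on-mbb}) together with the First Welfare Theorem argument of Lemma~\ref{lem:pEF1impliesEF1} gives fPO, that Line 7 guarantees EQ1 of any returned allocation, and that Lemmas~\ref{lem:utilinc-eq1} and~\ref{lem:eq1po-runtime} bound the running time by $\poly{n,m,U}$. Your flagged uses of positivity (strictness of the utility increase and existence of EQ1+PO allocations) also match the paper's discussion.
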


As argued before, we have $U \le mv_{max} + 1$. This gives:

\begin{theorem}\label{thm:eq1po-pseudopoly}
Given a fair division instance $I = (N,M,V)$, an allocation that is EQ1 and fPO can be computed in time $\poly{n,m,v_{max}}$, where $v_{max} = \max_{i,j} v_{ij}$. In particular, when $v_{max} \le \poly{m,n}$, an EQ1+fPO allocation can be computed in $\poly{m,n}$ time.
\end{theorem}

Finally using Lemma~\ref{lem:sparse-poly-utils}, Theorem~\ref{thm:eq1po-main} becomes:
\begin{theorem}\label{thm:eq1po-sparse}
Given a $k$-ary fair division instance $I = (N,M,V)$ where $k$ is a constant, an allocation that is EQ1 and fPO can be computed in time $\poly{m,n}$.
\end{theorem}

\begin{remark}\normalfont
We remark that our techniques can be used to show that EQ1+fPO allocations of \textit{chores} can be computed in pseudo-polynomial time and in polynomial-time for $k$-ary instances with constant $k$. In the chores model, agent $i$ incur a cost or disutility $c_{ij} \ge 0$ on being assigned the chore $j$. For chores, an allocation $\x$ is said to be EQ1 if for all agents $i, h$, there exists a chore $j\in\x_i$ s.t. $c_i(\x_i\setminus \{j\}) \le c_h(\x_h)$. Note that while comparing bundles, an agent removes a chore from her own bundle in the case of chores, while an agent removes a good from other's bundle in the case of goods. To obtain an EQ1+fPO allocation of chores, Algorithm~\ref{alg:eq1po} can be modified to perform transfers from an EQ1-violator agent $h$, where $c_h(\x_h\setminus j) > c_i(\x_i)$ for every $j\in \x_h$ and a least disutility agent $i$. The convergence analysis for this modification of the algorithm for chores is similar to that of Algorithm~\ref{alg:eq1po}. We also note that in general, the existence of EF1+PO allocations of chores is open.
\end{remark}

\section{Finding EF1+PO allocations for constant number of agents}\label{sec:const-n}
In this section, we show how to compute an EF1+PO allocation in polynomial time when $n$, the number of agents, is a constant. Our algorithm relies on the fact that there \textit{exists} an EF1+fPO allocation for every instance and thus aims to find such an allocation by effectively enumerating over all fPO allocations. For constant $n$, we argue that this enumeration takes time $\poly{m}$. We borrow some terminology from \cite{branzei2019choresceei} to describe this enumeration technique. 

Given a fractional allocation $\z$, the \textit{consumption graph} $G_{\z}$ is defined to be a bipartite graph $(N,M,E)$, where $(i,j)\in E$ iff $z_{ij} > 0$. Let $P = (i_1, j_1, i_2, j_2, \dots, i_K, j_K, i_{K+1})$, be a path in a consumption graph $(N,M)$, where $K\ge 1$, and $v_{ij} > 0$ for any $(i,j)\in P$. We define the product of utilities along $P$ as: 
\begin{equation}\label{eq:pivalue}
    \pi(P) = \prod_{k=1}^K\frac{v_{i_k j_k}}{v_{i_{k+1} j_k}}.
\end{equation}
When $i_{K+1}=i_1$, $P$ is a cycle. We now characterize fPO allocations based on the properties of their associated consumption graphs.
\begin{lemma}[\cite{branzei2019choresceei}]\label{lem:fpocycle} An allocation $\z$ is fPO iff for every cycle $C$ in its consumption graph, $\pi(C) = 1$.
\end{lemma}
\begin{proof}(sketch)
See Corollary 16 of \cite{branzei2019choresceei} for the full proof. Intuitively, if $\pi(C) > 1$, then transferring small amounts of goods along the cycle will result in an allocation $\z'$, which is a Pareto-improvement over $\z$, contradicting the fact that $\z$ is fPO. Likewise, if $\pi(C) < 1$, performing the transfer in the opposite order results in a Pareto improvement.
\end{proof}

Thus, the existence of cycles $C$ in a consumption graph of an allocation with $\pi(C)$ depends on algebraic properties of the (non-zero) values $v_{ij}$ of the instance. This motivates the definition of \textit{non-degenerate instance} as an instance where the values $v_{ij}$ share no multiplicative relationship. Formally:
\begin{definition}[Non-degenerate instance] A fair division instance $I = (N, M, V)$ is said to be non-degenerate if for every path $P = (i_1, j_1, i_2, j_2, \dots, i_K, j_K, i_{K+1})$, in the complete bipartite graph $(N,M)$, where $K\ge 1$, and $v_{ij} > 0$ for any $(i,j)\in P$, it holds that $\pi(P) \neq 1$.
\end{definition}

For an allocation $\z$, let $\mathbf{u}(\z) \in \R^n$ be the corresponding utility vector, where $\mathbf{u}(\z)_i = v_i(\z_i)$ for each $i\in N$.
\cite{branzei2019choresceei} showed that for each fPO utility vector $\mathbf{u}$ of a non-degenerate instance, there is a unique feasible (fractional) allocation $\z$ such that $\mathbf{u}(\z) = \mathbf{u}$. \cite{branzei2019choresceei} showed how to enumerate fPO allocations using the following definition.
\begin{definition}[Rich family of graphs] A collection of bipartite graphs $\mathcal{G}$ is said to be \textit{rich} for a given instance $(N,M,V)$ if for any fPO utility vector $\mathbf{u}$, there is a feasible allocation $\z$ with $\mathbf{u}(\z) = \mathbf{u}$ such that the consumption graph $G_\z$ is in the collection $\mathcal{G}$.
\end{definition}
Thus, a rich family of graphs contains the consumption graphs of every fPO utility vector for the instance. \cite{branzei2019choresceei} show how to construct a rich family of graphs $\mathcal{G}$ for every instance.
\begin{theorem}[Proposition 23 of \cite{branzei2019choresceei}] For constant number of agents $n$, a rich family of graphs $\mathcal{G}$ can be constructed in time $O(m^{\frac{n(n-1)}{2}+1})$ and has at most $(2m+1)^{\frac{n(n-1)}{2}+1}$ elements.
\end{theorem}

Our algorithm for finding an EF1+fPO allocation in a non-degenerate instance $I$ with a constant number of agents is as follows. We generate a rich family of graphs in $\poly{m}$ time since $n$ is constant. We know there exists an integral EF1+fPO allocation $\x$, for instance $I$, with corresponding utility profile $\mathbf{u}(\x)$. Since $I$ is non-degenerate, $\x$ is the only allocation corresponding to the utility profile $\mathbf{u}(\x)$. Now since the collection $\mathcal{G}$ is rich, for the utility profile $\mathbf{u}(\x)$, the consumption graph $G_\x$ of $\x$ is present in $\mathcal{\G}$. Since $\x$ is integral, the degree of every good in $G_\x$ is one. Thus, we can find $\x$ by iterating over every graph $G$ in $\mathcal{G}$, checking if the degree of every good in $G$ is one, i.e., corresponds to an integral allocation, and then checking if the integral allocation is EF1. This algorithm runs in polynomial time since $\mathcal{G}$ has $\poly{m}$ graphs as $n$ is constant, and checking if an allocation is EF1 can also be done in $\poly{m}$ time.

We now show how to adapt our algorithm for all instances, not just non-degenerate instances. Given a fair division instance $I = (N,M,V)$, we use the perturbation technique of \cite{duan2016CE} to construct a non-degenerate instance $I' = (N,M,V')$ as follows. For each $i\in N$ and $j\in M$, we choose a distinct prime number $q_{ij}$. The prime number theorem implies that for every integer $Q$, there are at least $\frac{Q}{2\ln Q}$ primes less than or equal to $Q$. Thus for $Q = 8nm\ln(nm)$, there are at least $nm$ primes at most $Q$, which can be computed in $O(Q\ln Q)$ time. The values in the perturbed instance $I'$ are then defined as $v'_{ij} = v_{ij}\cdot q_{ij}^\eps$, for a small constant $\eps$ given by $\eps := \log_Q (1+\frac{1}{2mv_{max}}) \in (0,1)$.

We now run our algorithm on the non-degenerate instance $I'$ and compute an EF1+fPO allocation $\x$. We argue that $\x$ is EF1+PO for the original instance $I$ as well.

\begin{lemma}\label{lem:const-n-ef1}
If $\x$ is EF1 for $I'$, then $\x$ is EF1 for $I$.
\end{lemma}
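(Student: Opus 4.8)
The plan is to exploit the gap between the integrality of the original valuations and the smallness of the perturbations $\delta_{ij}$, so that any (non-strict) inequality that holds in $I'$ is forced to hold in $I$ as well. First I would record the key fact that, since each $v_{ij}\in\Z_{\ge 0}$, for every agent $i$ and every bundle $S\subseteq M$ the quantity $v_i(S)$ is a non-negative \emph{integer}. Then I would bound the total perturbation any agent can see on any bundle: as $v'_{ij}-v_{ij}\in\{0,\delta_{ij}\}\subseteq[0,\delta]$ and $\delta<\frac{1}{2mv_{max}}$, we have for every $S\subseteq M$
\[ 0 \le v'_i(S) - v_i(S) = \sum_{j\in S:\, v_{ij}>0} \delta_{ij} \le m\delta < \frac{1}{2v_{max}} \le \frac12, \]
using $v_{max}\ge 1$. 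In words, replacing $V$ by $V'$ changes the value of any bundle by strictly less than $\frac12$.

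Next I would fix an arbitrary ordered pair $i,h\in N$ and invoke EF1 of $\x$ for $I'$, which supplies a good $j\in\x_h$ with $v'_i(\x_i)\ge v'_i(\x_h\setminus\{j\})$. Carrying this \emph{same} $j$ over to the original instance and chaining the two perturbation bounds gives
\[ v_i(\x_i) + \tfrac12 > v'_i(\x_i) \ge v'_i(\x_h\setminus\{j\}) \ge v_i(\x_h\setminus\{j\}), \]
where the first inequality uses $v'_i(\x_i)<v_i(\x_i)+\frac12$ and the last uses $v'_i(\cdot)\ge v_i(\cdot)$. Hence $v_i(\x_i) > v_i(\x_h\setminus\{j\}) - \frac12$.

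The final step is the rounding argument, which is where integrality is essential: since both $v_i(\x_i)$ and $v_i(\x_h\setminus\{j\})$ are integers with $v_i(\x_i) > v_i(\x_h\setminus\{j\})-\frac12$, they must satisfy $v_i(\x_i)\ge v_i(\x_h\setminus\{j\})$ — for otherwise $v_i(\x_i)\le v_i(\x_h\setminus\{j\})-1 < v_i(\x_h\setminus\{j\})-\frac12$, a contradiction. As $i,h$ were arbitrary, this exhibits, for each pair, a witnessing good and establishes EF1 of $\x$ for $I$. I do not anticipate a genuine obstacle; the only point requiring care is keeping the perturbation budget $m\delta$ strictly below $\frac12$ — guaranteed by the stated choice $\delta<\frac{1}{2mv_{max}}$ — so that the unit integrality gap between distinct integer bundle-values is never bridged by the perturbation.
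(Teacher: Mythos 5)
Your proof is correct and uses essentially the same idea as the paper: the total perturbation on any bundle is at most $m\delta < 1$, which cannot bridge the unit integrality gap between distinct integer bundle values. The only cosmetic difference is that you argue directly (carrying the witness good $j$ from $I'$ back to $I$ and rounding), whereas the paper argues contrapositively (a strict violation in $I$ would force a strict violation in $I'$); the two are logically equivalent.
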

\begin{proof}
Consider any agent $i\in N$ and two sets $S,T\subseteq M$ such that $v_i(S)>v_i(T)$. First note:
\[ v'_i(S) - v_i(S) = \sum_{j\in S} v_{ij}\cdot(q_{ij}^\eps - 1) \ge 0.\]
Next, observe that:
\[ v'_i(T) - v_i(T) = \sum_{j\in T} v_{ij}\cdot(q_{ij}^\eps - 1) \le m\cdot v_{max} \cdot (Q^\eps - 1) = \frac{1}{2},\]
where the last equality used the definition of $\eps$.
Putting the above two inequalities together, we have:
\[v'_i(S) - v'_i(T) \ge v_i(S) - v_i(T) - \frac{1}{2} > 0, \]
where the last inequality holds because $v_i(S) - v_i(T) \ge 1$, since the original valuations are integral.

Suppose that $\x$ is EF1 for the instance $I'$ and not EF1 for the instance $I$. Then there are two agents $i,h$ s.t. for all $j\in\x_h$: $v_i(\x_h\setminus \{j\}) > v_i(\x_i)$. Then the argument above implies that $v'_i(\x_h\setminus \{j\}) > v'_i(\x_i)$ for every $j\in\x_h$. This contradicts our assumption that $\x$ is EF1 for $I'$.
\end{proof}

\begin{lemma}\label{lem:const-n-po}
If $\x$ is fPO for $I'$, then $\x$ is PO for $I$.
\end{lemma}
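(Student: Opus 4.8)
The plan is to prove the contrapositive: assuming $\x$ is not PO for $I$, I will produce a (possibly fractional) allocation that dominates $\x$ under $V'$, contradicting that $\x$ is fPO for $I'$. So suppose there is an integral allocation $\y$ with $v_i(\y_i)\ge v_i(\x_i)$ for all $i$ and $v_h(\y_h)>v_h(\x_h)$ for some $h$. The first step is to reuse the key inequality established in the proof of Lemma~\ref{lem:const-n-ef1}: since the original values are integral, $v_i(S)>v_i(T)$ forces $v_i(S)-v_i(T)\ge 1>m\delta$, and hence $v'_i(S)>v'_i(T)$. That is, every strict $V$-preference is preserved under $V'$. Applying this with $S=\y_i,\ T=\x_i$ to every agent $i$ who \emph{strictly} gains under $V$ (in particular $h$) yields $v'_i(\y_i)>v'_i(\x_i)$; thus all strictly-gaining agents stay strictly better off under $V'$, and $h$ already supplies the required strict improvement.

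The main obstacle is the agents who merely \emph{tie} under $V$, i.e.\ $v_i(\y_i)=v_i(\x_i)$ but $\y_i\ne\x_i$. For such an agent the $V$-difference is $0$, so the sign of $v'_i(\y_i)-v'_i(\x_i)=\mathrm{pert}_i(\y_i)-\mathrm{pert}_i(\x_i)$ is governed entirely by the perturbations and may well be negative. Consequently $\y$ itself need not dominate $\x$ under $V'$, and since $\x$ is only assumed fPO (hence PO) for $I'$, I cannot simply invoke the preservation inequality to close the argument. This is precisely the point at which the strictly stronger hypothesis of \emph{fractional} Pareto-optimality must be exploited: fPO forbids even fractional dominating allocations, which grants the extra freedom needed to repair these tie agents.

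To resolve the obstacle I would keep every strictly-gaining agent at $\y_i$ but revert each tie agent to its own bundle $\x_i$, so that each reverted agent attains exactly $v'_i(\x_i)$ and is therefore not harmed under $V'$. This partial assignment may double-claim some goods (held by a strictly-gaining agent in $\y$ but by a tie agent in $\x$) and leave others unclaimed; the over-claims and the deficits are equal in number and can be matched and cancelled along exchange cycles that pass through tie agents only. Because each swap inside such a cycle is between bundles of equal $V$-value, rerouting it keeps every tie agent's $V$-value at $v_i(\x_i)$ and hence, by the preservation inequality once more, its $V'$-value at least $v'_i(\x_i)$. The resulting fractional allocation $\z$ would then satisfy $v'_i(\z_i)\ge v'_i(\x_i)$ for all $i$ with strict inequality at $h$, contradicting the fPO of $\x$ for $I'$. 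I expect this conflict-resolution / cycle-cancelling step to be the delicate part of the argument; note that the market-equilibrium structure of the algorithm's actual output, which supplies supporting MBB prices for $V'$, offers an additional handle for carrying it out, whereas the fully general statement appears to genuinely require the fractional repair rather than any single global inequality on perturbations.
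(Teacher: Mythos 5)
Your proposal has a genuine gap, and it sits exactly where you predicted: the cycle-cancelling repair of the tie agents. The justification you give for that step is circular. You correctly observe that the preservation inequality from Lemma~\ref{lem:const-n-ef1} converts only \emph{strict} $V$-inequalities into strict $V'$-inequalities, and that for an agent with $v_i(\y_i)=v_i(\x_i)$ the sign of $v'_i(\y_i)-v'_i(\x_i)$ is at the mercy of the perturbations. But your repair then asserts that rerouting goods along exchange cycles ``keeps every tie agent's $V$-value at $v_i(\x_i)$ and hence, by the preservation inequality once more, its $V'$-value at least $v'_i(\x_i)$.'' That inference is precisely the one you ruled out: a tie agent who ends up with a bundle different from $\x_i$ but of equal $V$-value can still lose under $V'$. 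So the repair either returns each tie agent exactly $\x_i$ (re-creating the double-claimed and unclaimed goods you were trying to eliminate) or hands them an equal-$V$-value bundle (re-creating the perturbation problem). In addition, the counting claim that over-claims and deficits ``are equal in number'' is false in general: with $E$ the set of tie agents, conflicts minus deficits equals $\sum_{i\in E}\left(|\x_i|-|\y_i|\right)$, and tie agents' bundles in $\x$ and $\y$ need not have equal cardinality. Nothing in the proposal supplies a replacement argument, so the proof does not go through.

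The gap is not routine to fill, because the paper's own proof is of a different kind and uses information your construction never touches. The paper invokes the Second Welfare Theorem to obtain equilibrium prices $\p$ supporting $\x$ for $I'$, scaled so that $1\le p_j\le p_{max}=\frac{1}{2mv_{max}\delta}$, and then derives a single global inequality: since $(\x,\p)$ is on MBB for $I'$ and $v_{ij}\le v'_{ij}\le (1+\delta)v_{ij}$, one gets $\sum_i v_i(\x_i)/\alpha_i \ge \p(M)/(1+\delta)$, while a $V$-dominating $\y$ (whose strict gain is at least $1$ by integrality) forces $\sum_i v_i(\x_i)/\alpha_i \le \p(M)-1/\alpha_h$; combining yields $\delta\alpha_h\p(M)\ge 1+\delta$, contradicting $\alpha_h\le v_{max}$ and $\p(M)\le mp_{max}$. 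Note that this argument is quantitative --- it needs the two-sided price bounds, not merely the qualitative fact that $\x$ is fPO for $I'$ --- which is a warning sign for your route: a purely combinatorial repair, if it worked, would prove the statement with no reference to prices at all, and the natural way to make such a repair rigorous (exhibiting a trading cycle whose receiver-value product exceeds the giver-value product under $V'$) needs the perturbation to be exponentially small, on the order of $v_{max}^{-n}$, rather than the bound $\delta<\frac{1}{2mv_{max}}$ available here. The handle you mention in your last sentence --- the supporting MBB prices for $V'$ --- is not an optional extra; in the paper's argument it \emph{is} the proof.
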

\begin{proof}

The Second Welfare Theorem~\cite{mas1995microeconomic} states that for any fPO allocation $\x$ of an instance $([n],[m],V)$, there exists a set of non-negative budgets $\{e_i\}_{i\in[n]}$ and a set of prices of the good $\p\ge 0$, such that $(\x,\p)$ is a market equilibrium for the Fisher market instance $([n],[m],V,\{e_i\}_{i\in [n]})$. 

Therefore, since $\x$ is fPO for $I'$, by the Second Welfare Theorem, there exists a set of prices $\p$ s.t. $(\x,\p)$ is a market equilibrium for an associated Fisher market instance. In particular, $(\x,\p)$ is on MBB. For an agent $i$, let $\alpha'_i = \max_{j\in M} v'_{ij}/p_j$, and let $\alpha_i = \max_{j\in M} v_{ij}/p_j$. Since the prices can be scaled, we can assume that for all $j\in M$, $1 \le p_j \le 2$.

For any agent $i$ and good $j$, by the definition of $v'_{ij}$ and $\eps := \log_Q(1+\frac{1}{2mv_{max}})$, we have that:
\[v_{ij} \le v'_{ij} \le v_{ij}\cdot(1+\delta),\]
where $\delta = \frac{1}{2mv_{max}}$. This means that $\alpha'_i \ge \alpha_i$ for every $i\in N$. Further, since $(\x,\p)$ is on MBB for $I'$, for any $i\in N$:
\[
\p(\x_i) = \frac{v'_i(\x_i)}{\alpha'_i} \le \frac{(1+\delta)v_i(\x_i)}{\alpha_i},
\]
which gives:
\begin{equation}\label{eqn:const-PO}
\sum_{i\in N} \frac{v_i(\x_i)}{\alpha_i} \ge \sum_{i\in N}\frac{\p(\x_i)}{1+\delta}  \ge \frac{\p(M)}{1+\delta}
\end{equation}
Suppose $\x$ is not PO for the instance $I$. Then there is some allocation $\y$ that Pareto-dominates $\x$, i.e., for every $i\in N$, $v_i(\y_i) \ge v_i(\x_i)$, and for some $h\in N$, $v_h(\y_h) \ge v_h(\x_h) + 1$, since the valuations $v_{ij}$ are integral. Further since $\alpha_i$ is the maximum bang-per-buck ratio for an agent $i$ at prices $\p$, we have for every $i\neq h$:
$\p(\y_i) \ge \frac{v_i(\y_i)}{\alpha_i} \ge \frac{v_i(\x_i)}{\alpha_i}$
and
$\p(\y_h) \ge \frac{v_h(\y_h)}{\alpha_h} \ge \frac{v_h(\x_h)+1}{\alpha_h}$. This gives:
\begin{equation}\label{eqn:const-PO-2}
\begin{aligned}
\sum_{i\in N} \frac{v_i(\x_i)}{\alpha_i} &= 
\sum_{i\in N\setminus\{h\}}\frac{v_i(\x_i)}{\alpha_i} + \frac{v_h(\x_h)}{\alpha_h} \\
&\le \sum_{i\in N\setminus\{h\}}\p(\y_i) + \p(\y_h) - \frac{1}{\alpha_h} \\
&= \p(M) - 1/\alpha_h . 
\end{aligned}
\end{equation}
Putting~\eqref{eqn:const-PO} and~\eqref{eqn:const-PO-2} together, we get:
\[
\p(M) - 1/\alpha_h\ge\frac{\p(M)}{1+\delta},
\]
which simplifies to:
\[\delta\alpha_h\p(M) \ge 1+\delta .\]
Notice however that $\alpha_h \le v_{max}$, and $\p(M) \le 2m$, and hence:
\[\delta\alpha_h\p(M) \le \delta \cdot v_{max} \cdot 2m = 1,\]
which is a contradiction. Hence $\x$ must be PO.
\end{proof}

Thus, we have shown:
\begin{theorem}\label{thm:const-n-ef1po}
Given a fair division instance $I = (N,M,V)$, where $n$ is constant, an EF1+PO allocation can be computed in $\poly{m}$ time.
\end{theorem}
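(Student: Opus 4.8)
The plan is to reduce the search for an EF1+PO allocation to an enumeration over fractionally Pareto-optimal (market-equilibrium) allocations, using Theorem~\ref{thm:ef1po-main} as the existence guarantee that makes the search succeed. The key point is that the construction behind Theorem~\ref{thm:ef1po-main} is scale-free, so it yields an integral EF1+fPO allocation even on rational (perturbed) instances. I would exploit this to first solve non-degenerate instances exactly by enumeration, and then lift the conclusion to an arbitrary instance by a perturbation argument backed by Lemma~\ref{lem:const-n-ef1} and Lemma~\ref{lem:const-n-po}.

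For a non-degenerate instance I would enumerate all equilibrium price vectors $\p$ by cell enumeration. Since the relation $j\in mbb_i$ is determined by sign conditions on the quantities $v_{ij}/p_j$, the space of price vectors partitions into $\poly{m}$ cells when $n$ is constant (as in the references cited in the surrounding text), each cell fixing a single MBB graph. For each cell I would pick a representative $\p$ and, using non-degeneracy, appeal to acyclicity of the MBB graph: an acyclic equilibrium shares at most $n-1$ goods, each among at most $n$ agents, so rounding the associated fractional allocation produces only $(n-1)^n = O(1)$ candidate integral allocations. For each of the resulting $\poly{m}$ candidate allocations I would test EF1 directly in $\poly{m}$ time. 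Since Theorem~\ref{thm:ef1po-main} guarantees an integral EF1+fPO allocation exists, and such an allocation is supported (by the Second Welfare Theorem) by an equilibrium price vector whose MBB graph is realized in one of the enumerated cells---appearing among the $O(1)$ roundings there---this search is guaranteed to return an EF1+fPO, hence EF1+PO, allocation for the non-degenerate instance.

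To handle an arbitrary instance $I$, I would form the non-degenerate perturbation $I'$ with $v'_{ij} = v_{ij} + \delta_{ij}$ on positive entries and $0$ otherwise, choosing the $\delta_{ij}$ small enough that $\delta = \max_{i,j}\delta_{ij} < 1/(2m v_{max})$, run the enumeration above on $I'$ to obtain an allocation $\x$ that is EF1+fPO for $I'$, and then invoke Lemma~\ref{lem:const-n-ef1} and Lemma~\ref{lem:const-n-po} to conclude that $\x$ is EF1+PO for $I$; the whole pipeline runs in $\poly{m}$ time for constant $n$. The main obstacle I anticipate is the completeness and size control of the cell enumeration: I must argue that every relevant equilibrium MBB graph---in particular the one underlying the guaranteed EF1+fPO allocation---is realized in some enumerated cell, and that the number of cells stays $\poly{m}$ for fixed $n$. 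The perturbation step is then comparatively routine, requiring only that the chosen $\delta$ together with a price scaling $1 \le p_j \le p_{max}$ meet the quantitative hypotheses of the two transfer lemmas simultaneously.
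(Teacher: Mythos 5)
Your proposal is correct and follows essentially the same route as the paper: enumerate equilibrium price vectors by cell enumeration for a non-degenerate instance, exploit MBB-graph acyclicity to bound the integral roundings per cell by $(n-1)^n = O(1)$, test EF1 on each candidate, and lift to arbitrary instances via the perturbation $v'_{ij} = v_{ij}+\delta_{ij}$ with $\delta < 1/(2mv_{max})$ together with Lemma~\ref{lem:const-n-ef1} and Lemma~\ref{lem:const-n-po}. The "obstacle" you flag (completeness and polynomial size of the cell enumeration) is exactly what the paper discharges by citing prior work on equilibrium enumeration for constantly many agents, so no new argument is needed there.
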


Finally, we note that the techniques in Lemma~\ref{lem:const-n-ef1} also extend easily to EQ1. Since we showed that an EQ1+fPO allocation is guaranteed to exist for positive instances:
\begin{theorem}\label{thm:const-n-eq1po}
Given a positive fair division instance $I = (N,M,V)$, where $n$ is constant, an EQ1+PO allocation can be computed in $\poly{m}$ time.
\end{theorem}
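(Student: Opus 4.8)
The plan is to adapt the proof strategy for Theorem~\ref{thm:const-n-ef1po} to the equitability setting, replacing only the fairness-specific component while reusing the entire efficiency machinery verbatim. The overall architecture has three pieces: (i) an existence guarantee of an integral EQ1+fPO allocation for the perturbed instance $I'$, (ii) a fairness-transfer lemma showing that EQ1 for $I'$ implies EQ1 for $I$, and (iii) the Pareto-optimality transfer (Lemma~\ref{lem:const-n-po}) plus the enumeration scheme over equilibrium price vectors. Pieces (i) and (iii) carry over unchanged: existence of EQ1+fPO follows from Theorem~\ref{thm:eq1po-main} (which applies since $I$ is positive, and the perturbation keeps $I'$ positive), and Lemma~\ref{lem:const-n-po} is a statement purely about efficiency and is agnostic to whether we sought EF1 or EQ1. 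The cell-enumeration over price vectors in time $\poly{m}$ for constant $n$, and the rounding of the $O(1)$ fractional allocations respecting each acyclic MBB graph, are likewise fairness-independent.

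First I would invoke the existence result: since $I$ is positive-valued, the perturbed instance $I'$ (with $v'_{ij} = v_{ij} + \delta_{ij}$, which remains positive) is also positive-valued and non-degenerate, so by Theorem~\ref{thm:eq1po-main} an integral EQ1+fPO allocation exists for $I'$. Next I would establish the EQ1-analogue of Lemma~\ref{lem:const-n-ef1}: if $\x$ is EQ1 for $I'$ then $\x$ is EQ1 for $I$. The argument is essentially identical to Lemma~\ref{lem:const-n-ef1}, the only change being that the EQ1 comparison is between $v_i(\x_i)$ and $v_h(\x_h\setminus\{j\})$ rather than $v_i(\x_i)$ versus $v_i(\x_h\setminus\{j\})$. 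Concretely, I would show that whenever $v_i(S) > v_i(T)$ for integral valuations with $|T|\le m$, the perturbation gap $\sum_{j\in S}\delta_{ij} - \sum_{j\in T}\delta_{ij} \ge -m\delta$ is dominated by the integer gap $v_i(S)-v_i(T)\ge 1$ under the choice $m\delta < 1$, hence $v'_i(S) > v'_i(T)$; applying this with the relevant agents and bundles and taking the contrapositive yields that a failure of EQ1 for $I$ would contradict EQ1 for $I'$.

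Then I would apply Lemma~\ref{lem:const-n-po} to conclude that the same allocation $\x$, being fPO for $I'$, is PO for $I$. Finally I would assemble the algorithm exactly as in Theorem~\ref{thm:const-n-ef1po}: enumerate all equilibrium price vectors via cell-enumeration in $\poly{m}$ time (valid for constant $n$), for each price vector enumerate the $O(1)$ integral allocations obtained by rounding fractional allocations respecting the acyclic MBB graph, and test each for EQ1 in $\poly{m}$ time. Since an EQ1+fPO allocation for $I'$ is guaranteed to exist and the enumeration is exhaustive over fPO allocations, the search succeeds, and the EQ1-transfer and PO-transfer lemmas certify that the returned allocation is EQ1+PO for $I$.

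The main obstacle, such as it is, is mostly bookkeeping rather than a genuine difficulty: I must verify that the EQ1 comparison (which unlike EF1 mixes the valuation functions of two different agents, $v_i$ on one side and $v_h$ on the other) still behaves correctly under perturbation. This is where care is needed, because Lemma~\ref{lem:const-n-ef1} only ever compared a single agent's valuation of two sets. However, since EQ1 for a fixed ordered pair $(i,h)$ reduces to comparing $v_i(\x_i)$ with $v_h(\x_h\setminus\{j\})$, I can still apply the single-agent perturbation bound separately to agent $i$'s valuation of $\x_i$ and agent $h$'s valuation of $\x_h\setminus\{j\}$; each valuation is perturbed by at most $m\delta < 1$, the integer gap between distinct integral totals is at least $1$, and so the sign of the comparison is preserved. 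Everything else is a direct reuse of the constant-$n$ enumeration framework already established.
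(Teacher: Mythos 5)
Your proposal is correct and follows essentially the same route as the paper, which proves this theorem by a one-line remark that Lemma~\ref{lem:const-n-ef1} extends to EQ1 and that existence of EQ1+fPO allocations for positive instances (Theorem~\ref{thm:eq1po-main}) plugs into the constant-$n$ enumeration framework together with Lemma~\ref{lem:const-n-po}. Your additional care on the cross-agent comparison $v_i(\x_i)$ versus $v_h(\x_h\setminus\{j\})$ under perturbation is exactly the detail the paper leaves implicit, and your handling of it (integer gap $\ge 1$ dominates the at most $m\delta<1$ shift on each side) is sound.
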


\section{EF1+PO allocations for $k$-ary instances with constant $n$ and $k$}\label{sec:const-n-k}
We now consider $k$-ary fair division instances $(N,M,V)$ where both $k$ and $n$ (number of agents) are constant.

Let $\mathcal{X}$ be the set of all allocations for the instance $I$. For each agent $i\in N$, let $T_i = \{v_i(\x_i) : \x \in \mathcal{X}\}$, the set of different utility values $i$ can get from any allocation. Let $U = \max_{i\in N} |T_i|$. From Lemma~\ref{lem:sparse-poly-utils}, we know $U$ is at most $\poly{m}$. Define $T = T_1\times\cdots\times T_n$. We note that $|T| \le (\poly{m})^n = \poly{m}$, since $n$ is constant, and can be computed in $\poly{m}$-time.

To solve certain fair division problems for such instances, we enumerate over each entry $(u_1,\dots,u_n)$ of $T$ and check if there is a feasible allocation $\x$ in which each agent $i$ gets utility exactly $u_i$. The next lemma shows that the latter can be done efficiently.

\begin{lemma}\label{lem:const-n-k}
Given a vector $(u_1,\dots,u_n)\in T$, it can be checked in $\poly{m}$-time whether there is a feasible allocation $\x$ s.t. for all agents $i$, $v_i(\x_i) = u_i$.
\end{lemma}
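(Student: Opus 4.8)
The plan is to exploit the fact that, because both $n$ and $k$ are constant, each agent can attain only polynomially many distinct utility values (Lemma~\ref{lem:sparse-poly-utils}), so the space of partial utility profiles is polynomially bounded and can be searched by dynamic programming.

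First I would set up a reachability DP over the goods. Order the goods $1,2,\dots,m$ arbitrarily, and for $t\in\{0,1,\dots,m\}$ maintain the set $R_t$ of \emph{attainable partial profiles}: a vector $(w_1,\dots,w_n)$ lies in $R_t$ iff there is a partition of the first $t$ goods among the agents giving agent $i$ utility exactly $w_i$. Set $R_0=\{(0,\dots,0)\}$, and obtain $R_t$ from $R_{t-1}$ by the transition that, for each profile $(w_1,\dots,w_n)\in R_{t-1}$ and each agent $i\in N$, adds the profile obtained by giving good $t$ to agent $i$ (increasing coordinate $i$ by $v_{it}$). Since an allocation is a partition of all of $M$, good $t$ must go to exactly one agent, so this transition is exhaustive. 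After processing all goods, the desired allocation exists iff $(u_1,\dots,u_n)\in R_m$; a witnessing allocation is recovered via back-pointers.

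Next I would bound the running time. The key observation is that each coordinate $w_i$ of a profile in $R_t$ equals $v_i(S)$ for the set $S\subseteq\{1,\dots,t\}\subseteq M$ of the first $t$ goods given to agent $i$, so it takes one of at most $U_i$ distinct values. Hence $|R_t|\le \prod_{i\in N}U_i \le U^n$, which by Lemma~\ref{lem:sparse-poly-utils} is at most $(\poly{m})^n=\poly{m}$ since $n$ is constant. Each step touches $|R_{t-1}|$ profiles and $n$ agents, so computing $R_0,\dots,R_m$ (keying profiles in a dictionary to deduplicate) takes time $O(m\cdot n\cdot U^n)=\poly{m}$, and the final membership test is immediate.

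The only point requiring care is this state-space bound: one must argue the partial profiles never blow up beyond $U^n$, which is exactly Lemma~\ref{lem:sparse-poly-utils} applied coordinatewise (the restriction of agent $i$'s valuation to the first $t$ goods is still $k$-ary, so its achievable utilities form a subset of agent $i$'s overall achievable utilities). I do not expect a genuine obstacle here. As an alternative that avoids the DP, one could group the goods into the at most $k^n$ \emph{value-types} (the type of good $j$ being the profile $(v_{1j},\dots,v_{nj})$) and express the existence of the desired allocation as an integer feasibility program whose variables count how many goods of each type each agent receives; this program has $nk^n=O(1)$ variables and can be solved in time polynomial in the input length by Lenstra's algorithm for integer programming in fixed dimension.
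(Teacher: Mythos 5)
Your proof is correct, but your main argument takes a genuinely different route from the paper's. The paper exploits the $k$-ary structure directly: it assigns each good a \emph{label} given by its value vector $(v_{1j},\dots,v_{nj})$, of which there are at most $k^n = O(1)$, and reduces the question to feasibility of an integer system in the $n|L| = O(1)$ variables $m_{i\ell}$ counting how many goods of label $\ell$ agent $i$ receives, subject to $\sum_{\ell} m_{i\ell} v_{i\ell} = u_i$, $\sum_{i} m_{i\ell} = m_\ell$, and $0 \le m_{i\ell} \le m_\ell$; since each variable ranges over at most $m+1$ values and the dimension is constant, brute-force enumeration is $\poly{m}$. This is exactly your parenthetical alternative, except the paper uses plain enumeration where you invoke Lenstra's algorithm (which the bounded ranges make unnecessary). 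Your primary argument --- the dynamic program over goods whose states are attainable partial utility profiles, with $|R_t| \le \prod_i U_i \le U^n = \poly{m}$ by Lemma~\ref{lem:sparse-poly-utils} --- is sound: the transition is exhaustive because each good goes to exactly one agent, the state bound holds coordinatewise because any subset of the first $t$ goods is a legitimate bundle for agent $i$ in some full allocation (so its value is among the $U_i$ attainable ones), and back-pointers recover a witnessing allocation, which is what Theorem~\ref{thm:const-n-k} ultimately needs. The two approaches buy slightly different things. Your DP uses the $k$-ary hypothesis only through Lemma~\ref{lem:sparse-poly-utils}, so it works verbatim for any instance with constant $n$ in which each agent has polynomially many attainable utilities; its exponent is roughly $kn$, versus the $n k^n$ incurred by naively enumerating the paper's system. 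The paper's type-counting formulation, in turn, isolates the cleaner structural fact that only the multiset of value-types matters, and its constant-dimensional integer-program shape extends readily if one wants to impose further linear side constraints on the allocation.
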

\begin{proof}
For each agent $i\in N$, let $S_i = \{v_{ij} : j\in M\}$ be the set of values $i$ has for the goods. For each good $j\in M$, define $\s{label}(j)$ to be the position of the vector $(v_{1j},\dots,v_{nj})$ when the elements of $S_1\times \cdots \times S_n$ are ordered lexicographically. Let $L$ be the set of labels. Clearly $|L| \le |S_1|\cdots|S_n| = O(1)$ since each $|S_i| \le k$, and $k$ and $n$ are constants. Essentially, this means that there are $|L| = O(1)$ different \textit{types} of goods. For a label $1\le \ell\le |L|$, and for any agent $i$, let $v_{i\ell}$ equal $v_{ij}$ for any good $j$ with $\s{label}(j) = \ell$. Further, let $m_\ell$ be the number of goods with label $\ell$. All goods can be labeled in $\poly{m}$ time.

For each agent $i\in N$, let $T_i = \{v_i(\x_i) : \x \in \mathcal{X}\}$, the set of different utility values $i$ can get from any allocation. Let $U = \max_{i\in N} |T_i|$. From Lemma~\ref{lem:sparse-poly-utils}, we know $U$ is at most $\poly{m}$. Define $T = T_1\times\cdots\times T_n$. We note that $|T| \le (\poly{m})^n = \poly{m}$, since $n$ is constant.

To solve certain fair division problems for such instances, we simply enumerate over each entry $(u_1,\dots,u_n)$ of the set $T$ and check if there is a feasible allocation $\x$ in which each agent $i$ gets utility exactly $u_i$. 

For the latter, we define integer variables $m_{i\ell}$ for each $i\in N$ and $1\le \ell \le |L|$, which represents the number of goods with label $\ell$ assigned to agent $i$. Now consider the following linear system:
\begin{align*}
\forall i\in N:& \quad\sum_{\ell=1}^{|L|} m_{i\ell}v_{i\ell} = u_i \\
\forall \ell \in [|L|]:& \quad\sum_{i\in N} m_{i\ell} = m_\ell \\
\forall i\in N,\, \forall \ell\in [|L|]:& \quad 0\le m_{i\ell} \le m_\ell
\end{align*}
This system has $n|L|$ variables, and each variable $m_{i\ell}$ can take at most $m+1$ values. Therefore, this system has a constant dimension, and each variable is in a bounded range. Thus, by simple enumeration, we can check in $\poly{m}$ time whether this system has a solution or not. If it does, then an allocation $\x$ exists, which gives every agent $i$ a utility of $u_i$.

Therefore, given a vector $(u_1,\dots,u_n)\in T$, checking if there is a feasible allocation $\x$ in which each agent $i$ gets utility exactly $u_i$ can be done in $\poly{m}$-time. 
\end{proof}

By iterating through $T$, we can prepare a list of feasible utility vectors (and corresponding allocations) that satisfy our fairness and efficiency criteria in $\poly{m}$-time.

\begin{theorem}\label{thm:const-n-k}
For $k$-ary instance $I = (N,M,V)$ where both $k$ and the number of agents $n$ are constants, we can compute in $\poly{m}$ time (i) an MNW allocation, (ii) a leximin optimal allocation, (iii) a $\mathcal{F}$+fPO allocation (when it exists) where $\mathcal{F}$ is any polynomial-time checkable property.
\end{theorem}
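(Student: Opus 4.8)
The plan is to reduce all three tasks to enumerating a polynomial-size family of candidate allocations described at the level of \emph{good-types}, and then applying the appropriate selection or test to each candidate. As in the proof of Lemma~\ref{lem:const-n-k}, I group the goods into $|L|\le k^n = O(1)$ types, where two goods share a type if they have identical value-profiles across all agents. A \emph{type-level allocation} is then an integer matrix $(m_{i\ell})_{i\in N,\,\ell\in[|L|]}$ with $0\le m_{i\ell}$ and $\sum_{i\in N} m_{i\ell} = m_\ell$, recording how many goods of type $\ell$ each agent receives. Since each $m_{i\ell}\in\{0,\dots,m\}$ and there are $n|L| = O(1)$ entries, there are at most $(m+1)^{n|L|} = \poly{m}$ such matrices, and they can all be listed in $\poly{m}$ time. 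Each type-level allocation fixes every agent's utility $u_i = \sum_{\ell} m_{i\ell} v_{i\ell}$ and fixes each agent's bundle up to the irrelevant choice of \emph{which} goods of a given type she receives; in particular it specifies a concrete integral allocation $\x$ that can be produced in $\poly{m}$ time. This is simply the enumeration underlying Lemma~\ref{lem:const-n-k}, carried out over all target matrices at once rather than solving for a fixed utility vector.

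First I would handle (i) and (ii), which depend only on the realized utility vector. For the MNW allocation I evaluate, for each candidate, the pair $\big(\text{number of agents with positive utility},\ \prod_{i}\text{ of the positive } u_i\big)$ and return the lexicographically-largest, which matches the standard MNW convention when no allocation makes the full product positive. For the leximin-optimal allocation I sort each candidate's utility vector in nondecreasing order and return the lexicographically largest such sorted vector. Both selections amount to $\poly{m}$ comparisons over a $\poly{m}$-size list, and each realized $(u_1,\dots,u_n)$ is exactly a feasible utility vector of $T$ as certified by Lemma~\ref{lem:const-n-k}.

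For (iii) the key observation is that fractional Pareto-optimality of an integral allocation depends only on its utility vector: by the paper's definition, $\x$ is fPO iff no fractional allocation's utility vector dominates $(v_1(\x_1),\dots,v_n(\x_n))$, so fPO is both invariant under exchanging two goods of the same type and, as noted earlier in the paper, poly-time checkable. Since $\mathcal{F}$ is poly-time checkable and, like every standard fairness criterion (EF1, EQ1, proportionality), invariant under such same-type exchanges, I can evaluate both $\mathcal{F}$ and fPO directly on each type-level candidate. I therefore sweep the $\poly{m}$ candidates and return any allocation that is simultaneously fPO and satisfies $\mathcal{F}$. Completeness holds because every integral allocation is represented, up to same-type relabelling, by exactly one candidate, and both tested properties are invariant under such relabelling; hence if an $\mathcal{F}$+fPO allocation exists, some candidate witnesses it.

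The main obstacle is the completeness of the search in part (iii), and it is resolved by two observations rather than by any hard computation: that fPO is a function of the utility vector alone, so searching over realized utility profiles loses nothing for the efficiency requirement, and that enumerating at the granularity of good-types—rather than only over utility vectors as in Lemma~\ref{lem:const-n-k}—captures every allocation up to same-type symmetry, so a structure-sensitive criterion $\mathcal{F}$ is not missed. The only quantity that must stay bounded for the whole argument is the type count $|L| \le k^n$, which is $O(1)$ precisely because $n$ and $k$ are constants; this is what keeps the candidate family of size $\poly{m}$ and every per-candidate test (utility computation, fPO-check, $\mathcal{F}$-check) within $\poly{m}$ time.
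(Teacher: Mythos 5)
Your proposal is correct and follows the same essential strategy as the paper---grouping goods into $O(1)$ many types and sweeping a $\poly{m}$-sized candidate family---but it organizes the enumeration differently, and in a way that is arguably more robust. The paper enumerates the set $T$ of achievable utility vectors and, for each vector, uses the integer system of Lemma~\ref{lem:const-n-k} to find a corresponding allocation; you enumerate the type-level matrices $(m_{i\ell})$ directly. For (i) and (ii), which depend only on the realized utility vector, the two organizations are interchangeable. For (iii) yours is the safer one: a bundle-sensitive property such as EF1 is \emph{not} determined by the utility vector, so keeping a single witness allocation per feasible utility vector (as the theorem's one-line argument in the paper suggests) could in principle miss an $\mathcal{F}$+fPO allocation whose utility vector is also realized by non-$\mathcal{F}$ allocations; enumerating all type-level matrices closes this gap (the ``simple enumeration'' inside Lemma~\ref{lem:const-n-k} implicitly visits the same matrices, but the paper never states that all of them, not just one per utility vector, must be tested). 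Your observation that fPO is a function of the utility vector alone is also correct, and it is exactly what lets you certify efficiency per candidate with a single LP check.

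One caveat, which you flag and the paper leaves implicit: your completeness argument for (iii) requires $\mathcal{F}$ to be invariant under exchanging two goods of the same type. This holds for EF1, EQ1, proportionality, and the other standard criteria, but it is a genuine restriction on the theorem's phrase ``any polynomial-time checkable property'': for an $\mathcal{F}$ that distinguishes individual goods of the same type (e.g., one referring to specific good identities), no enumeration at type granularity can be complete, and the space of concrete allocations is exponential in $m$. So your proof establishes the theorem under the same implicit hypothesis that the paper's proof needs; making that hypothesis explicit is a point in your favor rather than a defect.
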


\section{$\s{PLS}$ membership of finding an EF1+fPO allocation}\label{sec:pls}
In this section, we show that the problem of computing an EF1+fPO allocation lies in the complexity class $\s{PLS}$ \cite{JOHNSON1988PLS}. Essentially, $\s{PLS}$ captures local search problems whose local optimality can be verified in polynomial time. We, therefore, phrase the problem of computing an EF1+fPO allocation in a fair division instance $I = (N,M,V)$ as a local search problem $\Phi$. We closely follow our Algorithm~\ref{alg:ef1po}, which computes an EF1+fPO allocation and shows that it has the structure of a local search problem. We now describe the solution space, cost function, and neighborhood structure of $\Phi$, following which we show that the local maxima of $\Phi$ correspond to EF1+fPO allocations for the instance $I$. 

\textit{Solution space.} We first define a configuration space as follows. Let $\x^0$ be a specific initial integral fPO allocation. Each element of the configuration space is of the form $(\x, L, \varphi)$, where $\x$ is an integral fPO allocation, $L \subseteq N$ is a set of agents, and $\varphi\in\{0,1,\dots,U\}^n$, where $U$ is the maximum utility an agent can get in any allocation. Clearly, the solution space is finite and can be represented in size polynomial in the representation of the instance $I$. Moreover, since it can be checked in polynomial time via a linear program whether a given allocation is fPO \cite{Barman18FFEA}, we can efficiently check if a given tuple $(\x,L,\varphi)$ is a member of the configuration space or not.

Given a configuration $(\x, L, \varphi)$, we use the convex program \eqref{prg:prices} below to find a set of unique prices $\p$ s.t. $\p$ s.t. $\sum_j p_j = n$, $(\x, \p)$ is on MBB and $L$ is the set of least spenders in $(\x, \p)$. The vector $\varphi$ intuitively captures the potential at the allocation $(\x,\p)$ during the run of our EF1+fPO Algorithm~\ref{alg:ef1po} starting with $\x^0$ as the initial allocation. That is, $\varphi_i$ stores the utility of agent $i$ the last time $i$ was the LS in the run of Algorithm~\ref{alg:ef1po} starting from $(\x^0, \p^0)$, where $\q^0$ is the solution to the program \eqref{prg:prices} for the allocation $\x^0$. 

We now describe the convex program \eqref{prg:prices}, which for a given configuration $(\x, L, \varphi)$ finds a set of unique prices $\p$ s.t. $\sum_j p_j = n$, $(\x, \p)$ is on MBB, and $L$ is the set of least spenders in $(\x, \p)$.
\begin{equation}\label{prg:prices}
\begin{aligned}
\text{maximize} & \ \prod_j p_j \\
\forall i\in N, \forall j\in M: & \ \ p_j \ge \lambda_i \cdot v_{ij} \\
\forall i\in N, \forall j\in M, \x_{ij} > 0: & \ \ p_j = \lambda_i \cdot v_{ij} \\
\forall i\in N: & \ \ \p(\x_i) \ge \mu \\
\forall i\in L: & \ \ \p(\x_i) = \mu \\
& \sum_j p_j = n 
\end{aligned}
\end{equation}

The first two constraints capture the MBB condition; here, $\lambda_i$ is a variable that represents the reciprocal of the MBB ratio of agent $i$. The variable $\mu$ captures the spending of the least spenders. Finally, the constraint $\sum_j p_j = n$ and the convexity of the objective $\prod_j p_j$ ensure that the set of prices respecting the MBB and LS constraints are unique and positive. The convex program can be solved in polynomial time since the number of constraints is polynomial in $n$ and $m$.

\textit{Cost function.} The cost of a configuration $(\x, L, \varphi)$ is a lexicographic cost function $\langle \delta(\x), \varphi\rangle$, where $\delta(\x) \in \{-1,0,1\}$. If $\x$ is EF1, then $\delta(\x) = 1$. If $\x$ is not EF1, then $\delta(\x)$ is either $-1$ or $0$, depending on whether the configuration is valid or not. Let $\p$ be the set of prices obtained by solving the program \eqref{prg:prices} for the allocation $\x$. If $L$ does not equal the set of LS in $(\x, \p)$, then $\delta(\x) = -1$. Otherwise if $v_i(\x_i) < \varphi_i$ for a LS $i$ in $(\x,\p)$, then also $\delta(\x) = -1$. These cases correspond to the configuration $(\x, L, \varphi)$ being invalid, i.e., $L$ is not the correct set of LS, or $\varphi_i$ cannot be the spending of $i$ the last time $i$ was an LS in a run of Algorithm~\ref{alg:ef1po} from $(\x^0,\p^0)$. In the other case, $(\x, L, \varphi)$ is valid and if $\x$ is not EF1, then $\delta(\x) = 0$. We note that the cost of configuration can be computed in polynomial time.

\textit{Neighborhood structure.} A configuration $(\x, L, \varphi)$ with $\delta(\x) = 1$ has no neighbor. When $\delta(\x) = -1$, its neighbor is the allocation $(\x^0, L^0, 0^n)$, where $\x^0$ is the initial allocation and the set $L^0$ is the set of LS in $(\x^0, \p^0)$ for the prices $\p^0$ corresponding to $\x^0$.
When $\delta(\x) = 0$, the configuration is valid and $\x$ is not EF1. We then compute $\p$ using \eqref{prg:prices} for the allocation $\x$, and run our Algorithm~\ref{alg:ef1po} with $(\x, \p)$ as the initial configuration until the set of LS changes and we reach an allocation $(\x', \p')$, where the set of LS is $L'$. We then update the potential function to $\varphi'$ whose $i^{th}$ entry records the spending of agent $i$ the last time $i$ was an LS. The configuration $(\x', L', \varphi')$ is the neighbor of the $(\x, L, \varphi)$. From Lemma~\ref{lem:swap-poly}, we know the set of LS changes in $\poly{n,m}$ iterations, so the neighbor of a configuration can be computed in polynomial time. 

\textit{Membership in $\s{PLS}$.} Having defined the solution space, cost function, and the neighborhood structure of the local search problem $\Phi$, we show that it is in $\s{PLS}$. $\s{PLS}$ membership follows from demonstrating the following three polynomial time algorithms, which are implicit in the preceding paragraphs explaining the cost function and the neighborhood structure. 

\begin{enumerate}
\item Algorithm A: Which outputs the initial allocation $(\x^0, \p^0)$.
\item Algorithm B: Which outputs the cost of a configuration.
\item Algorithm C: Which outputs a neighbor with strictly higher cost or is locally optimal.
\end{enumerate}

Therefore, $\Phi$ is in $\s{PLS}$. Lastly, we argue that all the local optima of $\Phi$ correspond to EF1+fPO allocations. This is straightforward, since the local optima of $\Phi$ comprise of configurations $(\x, L, \varphi)$ where $\delta(\x) = 1$, which holds iff $\x$ is EF1. Since $\x$ is fPO from the definition of a configuration, $\x$ is EF1+fPO. We can, therefore, conclude:
\begin{theorem}\label{thm:pls}
The problem of computing an EF1+fPO allocation lies in $\s{PLS}$.
\end{theorem}

The standard algorithm for $\s{PLS}$ problems uses Algorithm A to find an initial solution and repeatedly uses Algorithms B and C to find neighbors with higher cost until a local optimum is reached. This standard algorithm when applied to $\Phi$ results in a sequence of allocations $(\x^0, \x^1, \dots, \x^T)$, where $\x^T$ is EF1+fPO, $T\le \poly{n,m,U}$, and this sequence of allocations is a subsequence of allocations encountered in the run of Algorithm~\ref{alg:ef1po} starting from $(\x^0, \p^0)$.

Lastly, we remark that the above result does not show that the problem of computing an EF1+PO allocation is in $\s{PLS}$. In fact, since it is coNP-complete to check if a given allocation is PO, the problem of computing an EF1+PO allocation is not even in $\s{TFNP}$ (unless $\s{P}=\s{NP}$).

\section{Discussion}
In this paper, we showed that an EF1+fPO allocation can be computed in pseudo-polynomial time, thus improving upon the result of Barman, Krishnamurthy, and Vaish \cite{Barman18FFEA}. Our work establishes the polynomial time computability of EF1+PO allocations for two large non-trivial subclasses of instances: (i) $k$-ary valuations with constant $k$, and (ii) constant $n$ (number of agents). These results are especially significant because polynomial-time computability was previously known only for the simple classes of binary or identical valuations. Moreover, computing the MNW allocation remains NP-hard for these classes, thus eliminating its use for efficient computation of an EF1+PO allocation. Further, these classes could be useful practically when the number of agents is small or the valuations are derived from asking the agents to rate items on a small scale. Our results also extend to the fairness notions of EQ1 and weighted EF1. 

On the complexity front, we showed that computing an EF1+fPO lies in $\s{PLS}$. Settling the complexity of the EF1+fPO problem by designing a polynomial time algorithm or showing $\s{PLS}$-hardness remains a challenging open question. Finally, showing the existence of EF1+PO allocations for general additive instances of chores is another interesting research direction.

\section*{Acknowledgements}
We would like to thank Ruta Mehta for discussion and insights regarding the proof of Theorem~\ref{thm:pls}. Work on this paper is supported by NSF Grant CCF-1942321.

\bibliography{references}

\end{document}